\crefname{assumption}{assumption}{assumptions}
\crefname{lemma}{lemma}{lemmas}
\DeclareMathOperator{\co}{conv}
\DeclareMathOperator{\interior}{int}
\DeclareMathOperator{\kernel}{Ker}
\newcommand{\D}{\mathrm{d}}
\DeclareMathOperator{\image}{Im}
\DeclareMathOperator{\proj}{Pr}
\newcommand\xqed[1]{%
  \leavevmode\unskip\penalty9999 \hbox{}\nobreak\hfill
  \quad\hbox{#1}}
\newcommand\demo{\xqed{$\triangle$}}
\newtheorem{remark}{Remark}
\newtheorem{assumption}{Assumption}
\newtheorem{proposition}{Proposition}
\newtheorem{lemma}{Lemma}
\newtheorem{theorem}{Theorem}
\newtheorem{corollary}{Corollary}
\def\BibTeX{{\rm B\kern-.05em{\sc i\kern-.025em b}\kern-.08em
    T\kern-.1667em\lower.7ex\hbox{E}\kern-.125emX}}
\begin{document}
\title{Multiple Model Reference Adaptive Control with Blending for Non-Square Multivariable Systems}
\author{Alex Lovi, Bar\i{\c{s}} Fidan, and Christopher Nielsen
\thanks{This work is supported by the Natural Sciences and Engineering Research Council of Canada (NSERC). This work has been submitted to the IEEE for possible publication. Copyright may be transferred without notice, after which this version may no longer be accessible.}
\thanks{Department of Electrical and Computer Engineering, University of Waterloo, Waterloo ON, N2L 3G1 Canada (e-mail: alovi@uwaterloo.ca).}
\thanks{Department of Electrical and Computer Engineering, University of Waterloo, Waterloo ON, N2L 3G1 Canada (e-mail: cnielsen@uwaterloo.ca).}
\thanks{Department of Mechanical and Mechatronics Engineering, University of Waterloo, Waterloo ON, N2L 3G1 Canada (e-mail: fidan@uwaterloo.ca).}
}

\maketitle

\begin{abstract}
    In this paper we develop a multiple model reference adaptive controller (MMRAC) with blending. The systems under consideration are non-square, i.e., the number of inputs is not equal to the number of states; multi-input, linear, time-invariant with uncertain parameters that lie inside of a known, compact, and convex set. Moreover, the full state of the plant is available for feedback. A multiple model online identification scheme for the plant's state and input matrices is developed that guarantees the estimated parameters converge to the underlying plant model under the assumption of persistence of excitation. Using an exact matching condition, the parameter estimates are used in a control law such that the plant's states asymptotically track the reference signal generated by a state-space model reference. The control architecture is proven to provide boundedness of all closed-loop signals and to asymptotically drive the state tracking error to zero. Numerical simulations illustrate the stability and efficacy of the proposed MMRAC scheme.
\end{abstract}

\begin{keywords}
	Adaptive control, Model reference adaptive control, Multiple model, Polytopic uncertainty, Uncertain systems.
\end{keywords}

\section{Introduction}



The use of multiple models to describe the dynamics of uncertain systems has been studied~\cite{narendra_new_2012, morse_simple_2011, kuipers_multiple_2010, hespanha_overcoming_2003}, and shown to improve the transient-time performance~\cite{PrasadG.Maruthi2020}, the steady-state tracking performance~\cite{hespanha_overcoming_2003,VuLinh2011SCoU}, and robustness~\cite{hespanha_switching_2002} when compared to a single model approach. Multiple model control techniques can be broadly divided into \emph{switching} control, which allows for the selection of a best model in different dynamic situations~\cite{hespanha_multiple_2001, hespanha_switching_2002}, and \emph{blending} control, where the information from different models is combined to get a single description of the system~\cite{zengin_blending_2021, buyukkabasakal_mixing_2017, kuipers_multiple_2010,mancilla-aguilar_algorithm_2015}. In this paper we opt for a blending technique since it allows for better closed-loop performance when compared to any single model technique~\cite{EvensenGeir2009DATE}, and avoids possible undesirable transient-time behavior that switching control may exhibit~\cite{DehghaniArvin2007UacA,BaldiSimone2010Muas}.




Multiple model approaches with blending have various promising applications in identification and control of time-varying (TV) systems, including adaptive identification of MIMO, linear, periodically TV systems (with known period)~\cite{narendra_adaptive_2019} and improvement of the transients and adaptation speed in adaptive control of uncertain, TV, MIMO systems~\cite{zengin_blending_2021}. The stability and robustness properties were studied in~\cite{kuipers_multiple_2010}. Mixing adaptive techniques have also been used to achieve faster tracking for a class of nonlinear discrete-time systems~\cite{ZhangYan-Qi2023Ammc}. For the case of experimental results, the use of mixing control has allowed to decrease overshoots, settling time, steady state error~\cite{PandeyVinayKumar2017Cdfa,DuttaLakshmi2021Ampc}, and designing fault tolerant controllers~\cite{buyukkabasakal_mixing_2017}. Other applications include multiple model estimation of power systems models~\cite{moffat_multiple_2021}, development of an adaptive controller for partially-observed Boolean dynamical systems~\cite{imani_multiple_2017}, and distributed state estimation using a network of local sensors~\cite{wang_fully_2016}.

Model reference control has been rigorously and methodically studied for many years and proposes promising applications with detailed design procedures.
When we consider systems with uncertainty, model reference adaptive controllers (MRAC) are a versatile technique that achieves robust and satisfactorily closed-loop performances~\cite{ioannou_adaptive_2006,tao2003adaptive}. The combination of MRAC with multiple model approaches to obtain a continuous input signal calculated using all the identification errors from all the models is studied in~\cite{narendra_new_2012,zhuo_han_new_2012}. Similar work is present in~\cite{ahmadian_new_2015}, where adaptive identification models are considered instead of fixed models. In~\cite{zengin_blending_2021}, a similar identification scheme is integrated with linear quadratic optimal controllers to design a multiple MRAC (MMRAC) scheme for tracking reference signals generated by a linear reference model. The asymptotic tracking of the MMRAC scheme for linear, time-invariant (LTI), MIMO systems is proven and simulations are presented to evaluate and validate the performance.

In this paper we consider non-square, multi-input, LTI systems with polytopic parameter uncertainty, whose unknown plant models are in the interior of the convex hull of a finite number of fixed models. Based on full state measurements, we develop a parameter identification scheme that estimates a weight vector which determines the convex combination that yields the state-space representation of the uncertain system.

This paper extends our previous work~\cite{lovi2022} in several significant ways. The systems under consideration do not have to be square, i.e., the number of inputs can be less or equal to the number of states. Moreover, unlike in~\cite{lovi2022}, in this article the number of fixed models that define the polytopic uncertainty of the plant can be arbitrary. This relaxation allows us to develop a procedure to obtain a set of fixed corner systems to use in the identification process. Furthermore, we provide sufficient conditions under which the parameter estimates asymptotically converge to the plant's uncertain parameters.
The simulations presented here point to faster convergence of the parameter estimates to their true values, making the state error of the proposed MMRAC scheme also converge faster compared to a single model MRAC approach.

The problem formulation is presented in \Cref{sec:Problem_Formulation}, together with the assumptions. In~\Cref{sec:cornermodel}, we present a selection process for the corner systems. In \Cref{sec:sysidentification}, we present the identification scheme with its stability analysis. We use these results in \Cref{sec:syscontrol} to develop the MMRAC scheme. A set of simulations are presented comparing the MMRAC scheme to the single model case for uncertain systems tracking references in \Cref{sec:Simulation} using MATLAB and Simulink based simulations. We finalize with the conclusions in \Cref{sec:Conclusions}.

\subsubsection*{Notation} If $ S \subset \mathbb{R}^{n} $, then $ \co \left( S \right)$ denotes the closed convex hull of $ S $, the interior of $S$ is written $ \interior (S) $. Let $ \| \cdot \| $ denote the 2-norm of both vectors and matrices. The kernel of a matrix $A$ is denoted by $ \kernel A $. Given two matrices $ Q, R \in \mathbb{R}^{n \times n} $, we write $ Q \succeq R $ if $ Q - R $ is positive semi-definite. For two signals $ f_1 $ and $ f_2 $ we write $ f_1 \equiv f_2 $, if there exist $ \alpha$, $\beta > 0 $ such that for all $ t \geq 0 $, $ {\| f_1 \left( t \right) - f_2 \left( t \right) \| \leq \beta e^{-\alpha t}} $. Further, if $e_f=f_1-f_2$ satisfies $\dot{e}_f(t)=-\alpha e_f(t)$ then we write $ f_1 \stackrel{\alpha}{=} f_2 $. If $ x \in \mathbb{R}^N $, then $ x_i $ denotes the $ i $-th component, and $ {\bar{x} = \left( x_1, \cdots, x_{N-1}  \right) \in \mathbb{R}^{N-1}} $. 
A signal $\Phi : [0,\infty) \rightarrow \mathbb{R}^{k}$ is persistently exciting (PE) if there exist $\alpha_{\Phi 1},\alpha_{\Phi 2},T_\Phi>0$ such that for every $t\geq 0$
\begin{equation}
    \alpha_{\Phi 1} \mathbb{I} \preceq \int_{t}^{t+T_\Phi} \Phi(\tau) \Phi^\top(\tau) \D \tau \preceq \alpha_{\Phi 2} \mathbb{I}.
    \label{eq:PEinequality}
\end{equation}
\section{Problem Formulation}
\label{sec:Problem_Formulation}

Consider the MIMO, LTI system
\begin{equation}
        \dot{x}_p \left( t \right) = A_p x_p \left( t \right) + B_p u \left( t \right),\: x_p \left( 0 \right) = x_{p0},\: t \geq 0,
	\label{pr:plant}
\end{equation}
where $ A_p \in \mathbb{R}^{n \times n} $ and $ B_p \in \mathbb{R}^{n \times m} $ are unknown constant matrices, $ B_p $ is full column rank, $ x_p \left( t \right) \in \mathbb{R}^{n} $, $ u \left( t \right) \in \mathbb{R}^{m} $ are the state of the system and the control input, respectively. We assume that $ x_p(t)$ is available for feedback and that the number states $n$ and inputs $m$ are known.
\begin{remark}
    The results of this article can be applied to systems with structured uncertainties of the form $ {\left( A_p \left( \eta \right), B_p \left( \eta \right) \right)} $, where $ {A_p \left( \eta \right) \in \mathbb{R}^{n \times n}} $ and $ {B_p \left( \eta \right) \in \mathbb{R}^{n \times m}} $ are unknown constant matrices dependent on the uncertain parameter vector $ {\eta \in S \subset \mathbb{R}^{q}} $, belonging to a compact set. \demo
    \label{remarkParameterization}
\end{remark}

This article aims to design a controller such that the state $ x_p $ of the plant \eqref{pr:plant} asymptotically tracks the signal $ x_r $ generated by the reference model
\begin{equation}
	\dot{x}_r \left( t \right) = A_r x_r \left( t \right) + B_r r \left( t \right),\: x_r \left( 0 \right) = x_{r0},\: t \geq 0,
	\label{pr:model_reference}
\end{equation}
where $A_r \in \mathbb{R}^{n \times n} $ and $ B_r \in \mathbb{R}^{n \times m} $ are known constant matrices, $A_r$ is Hurwitz, and $ r : \left[ 0,\infty \right) \rightarrow \mathbb{R}^m $ is a known, bounded, piecewise continuous reference signal. It is assumed that the plant \eqref{pr:plant} and the reference model \eqref{pr:model_reference} satisfy the exact matching conditions~\cite{tao2003adaptive}, as stated in the following assumption.
\begin{assumption}
    There exist matrices $ K^{*} \in \mathbb{R}^{m \times n} $ and $ {L^{*} \in \mathbb{R}^{m \times m}} $ such that
	\begin{align}
	    A_p + B_p K^{*} &= A_r, \label{pr:realKstar} \\
	    B_p L^{*} &= B_r. \label{pr:realLstar}
	\end{align} \demo
	\label{pr:assumption_geometric}
\end{assumption}
\Cref{pr:assumption_geometric} is a necessary and sufficient condition to guarantee the existence of a solution to the tracking problem when the plant's model~\eqref{pr:plant} is known~\cite{tao2003adaptive}. In addition,~\eqref{pr:realLstar} is equivalent to $\image B_r \subseteq \image B_p$, which means that, without loss of generality, we can assume that $\image B_p = \image B_r$.


Our approach to solving the aforementioned control design problem involves online parameter identification of the system matrices $\begin{bmatrix}
    A_p & B_p
\end{bmatrix} \eqqcolon \Theta_p$ by defining a compact, convex, uncertainty polytope such that for every extreme point of the polytope, also referred to as corner, there is a fixed model
\begin{align*}
    \dot{x}_{i} (t) &= A_i x_{i} (t) + B_i u(t),
\end{align*}
with system matrices $\begin{bmatrix}
    A_i & B_i
\end{bmatrix} \eqqcolon \Theta_i$, $i \in \left\{1,\cdots,N\right\}$. We define the set
\begin{equation}
    \mathcal{S} \coloneqq \left\lbrace
    \begin{bmatrix}
        A_1 & B_1
    \end{bmatrix}, 
    \cdots, 
    \begin{bmatrix}
        A_N & B_N
    \end{bmatrix} 
    \right\rbrace
    \label{pr:setS}
\end{equation}
consisting of every corner model.
\begin{assumption}
	There exist a finite set $\mathcal{S}$ of known system matrices $ \begin{bmatrix} A_i & B_i \end{bmatrix} \in \mathbb{R}^{n\times (n+m)} $, such that
    \begin{enumerate}
        \item[i)] $ {\begin{bmatrix} A_p & B_p \end{bmatrix} \in \interior \left( \co \left( \mathcal{S} \right) \right)} $. 
        \item[ii)] Every convex combination of the $B_i$ is full column rank.
    \end{enumerate}
    \demo
	\label{pr:assumption_convex_hull}
\end{assumption}


The first item of \Cref{pr:assumption_convex_hull} can be achieved from a system identification process which we describe in~\Cref{subs:obstainingS}. We require~\Cref{pr:assumption_convex_hull}~(ii) to ensure that we do not have redundant inputs, or we do not drop rank of the number of inputs, losing control authority of the system. Given a set $\mathcal{S}$, there exist numerical methods to verify whether every convex combinations of the $B_i$'s is full rank~\cite{KolodziejczakBarbara1999Ccom}. 




Every point of the polytope $\co \left( \mathcal{S} \right) $ can be expressed as a convex combination of the corner models $ \begin{bmatrix} A_i & B_i \end{bmatrix} $, which implies that the following set is non-empty:

\footnotesize
    \begin{equation}
        \mathcal{W} \coloneqq \left\lbrace
            w \in \left[ 0,1 \right]^N:
            \begin{bmatrix}
                A_p & B_p
            \end{bmatrix}
            = \sum_{i = 1}^{N} w_i
            \begin{bmatrix}
                A_i & B_i
            \end{bmatrix}
            , \sum_{i=1}^{N} w_i = 1
        \right\rbrace.
        \label{pr:convexity}
    \end{equation}
\normalsize
Consequently, the problem of identifying the unknown system matrix $ \begin{bmatrix} A_p & B_p \end{bmatrix} $ is equivalent to the problem of identifying a vector $w \in \mathcal{W}$. A preliminary study of the problem with $ N = n + 1 $, and $ m = n $ was studied in \cite{lovi2022}. In this paper unlike in~\cite{lovi2022}, we study the problem for any number of inputs $m>0$, and an arbitrary number $N$ of corner models.
\section{Corner Model Selection}
\label{sec:cornermodel}


Consider a set $\mathcal{S}$ that satisfies~\Cref{pr:assumption_convex_hull} and has $N$ elements. In this section we develop a constructive procedure that, starting with $\mathcal{S}$, produces a new set
\begin{align}
\label{eq:Sprime}
    \mathcal{S}^\prime = \left\lbrace
    \begin{bmatrix}
        A_1^\prime & B_1^\prime
    \end{bmatrix}, 
    \cdots, 
    \begin{bmatrix}
        A^\prime_{N^\prime} & B^\prime_{N^\prime}
    \end{bmatrix} 
    \right\rbrace,
\end{align}
that also satisfies~\Cref{pr:assumption_convex_hull}, and such that for every $i \in \left\{1,\cdots,N\right\}$ there exist matrices $ K_i \in \mathbb{R}^{m \times n} $ and $ {L_i \in \mathbb{R}^{m \times m}} $ such that,
\begin{align}
    A_i^\prime + B_i^\prime K_i &= A_r, \label{pr:fixedKi} \\
    B_i^\prime L_i &= B_r. \label{pr:fixedLi}
\end{align}



\subsection{Satisfying the Matching Conditions for the Corner Models}


In the next proposition we show how to use the information of~\Cref{pr:assumption_geometric} and a set $\mathcal{S}$ that satisfies~\Cref{pr:assumption_convex_hull} to obtain a new set $\mathcal{S}^\prime$ such that for every element of $\mathcal{S}^\prime$ there exist matrices $ K_i \in \mathbb{R}^{m \times n} $ and $ {L_i \in \mathbb{R}^{m \times m}} $ that satisfy \eqref{pr:fixedKi} and \eqref{pr:fixedLi}.




\begin{proposition}
    Suppose that the plant~\eqref{pr:plant} and reference model~\eqref{pr:model_reference} are such that~\Cref{pr:assumption_geometric} is satisfied. If there exists a set $\mathcal{S}$ that satisfies~\Cref{pr:assumption_convex_hull}, then there exists a set $\mathcal{S}^\prime$ that also satisfies~\Cref{pr:assumption_convex_hull}, and furthermore
    \begin{enumerate}
        \item $ \co (\mathcal{S}^\prime) \subseteq \co (\mathcal{S})$.
        \item For each $i \in \left\{1,\cdots,N\right\}$, there exist matrices $ K_i \in \mathbb{R}^{m \times n} $ and $ {L_i \in \mathbb{R}^{m \times m}} $ such that \eqref{pr:fixedKi} and \eqref{pr:fixedLi} are satisfied.
    \end{enumerate}
    \label{propCornerModel}
\end{proposition}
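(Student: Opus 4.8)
The plan is to construct $\mathcal{S}'$ by intersecting the uncertainty polytope $\co(\mathcal{S})$ with the affine subspace on which the matching conditions \eqref{pr:fixedKi}--\eqref{pr:fixedLi} are solvable, and then taking the vertices of that slice. The first step is to restate \eqref{pr:fixedKi}--\eqref{pr:fixedLi} geometrically: for any $[A\ B]$ with $B$ full column rank, there exist $K\in\mathbb{R}^{m\times n}$ and $L\in\mathbb{R}^{m\times m}$ with $A+BK=A_r$ and $BL=B_r$ if and only if $\image B=\image B_r$ and $\image(A_r-A)\subseteq\image B_r$ (here using that $B_r$ is full column rank, which follows from $\image B_r=\image B_p$ and $B_p$ being full column rank). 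Letting $\Pi$ denote the orthogonal projector onto $(\image B_r)^{\perp}$, these conditions describe the affine set
\begin{equation*}
    \mathcal{M}\coloneqq\left\lbrace [A\ B]\in\mathbb{R}^{n\times(n+m)}:\Pi B=0,\ \Pi A=\Pi A_r\right\rbrace
\end{equation*}
intersected with the (dense, open) full-column-rank locus of $B$. Using \eqref{pr:realLstar} together with the reduction $\image B_p=\image B_r$ recorded after \Cref{pr:assumption_geometric}, and \eqref{pr:realKstar} rewritten as $A_r-A_p=B_pK^{*}$, we get $\Pi B_p=0$ and $\Pi A_p=\Pi A_r$, i.e.\ $\begin{bmatrix}A_p&B_p\end{bmatrix}\in\mathcal{M}$.

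Next I would put $Q\coloneqq\co(\mathcal{S})\cap\mathcal{M}$, which is a polytope (a bounded polyhedron, being a polytope intersected with an affine subspace), and let $\mathcal{S}'$ be its finite vertex set, so $\co(\mathcal{S}')=Q$ with elements named as in \eqref{eq:Sprime}. Item~1 follows at once since $Q\subseteq\co(\mathcal{S})$. For \Cref{pr:assumption_convex_hull}~(ii) applied to $\mathcal{S}'$: every convex combination of the $B_i'$ is the $B$-block of a point of $Q\subseteq\co(\mathcal{S})$, hence a convex combination of the original $B_i$, hence full column rank by \Cref{pr:assumption_convex_hull}~(ii) for $\mathcal{S}$. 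For \Cref{pr:assumption_convex_hull}~(i): $\begin{bmatrix}A_p&B_p\end{bmatrix}$ lies in $\interior(\co(\mathcal{S}))\cap\mathcal{M}$, and slicing a convex set by an affine subspace through an interior point leaves that point in the relative interior of the slice, with $\operatorname{aff}(Q)=\mathcal{M}$. This supplies the required membership, with the caveat that for $m<n$ the polytope $\co(\mathcal{S}')$ sits inside the proper subspace $\mathcal{M}$ and so is not full dimensional; reconciling this with the ambient-topology reading of $\interior(\cdot)$ in \Cref{pr:assumption_convex_hull}~(i) is the one point that needs care, and amounts to interpreting that interior relative to $\operatorname{aff}(\mathcal{S}')$.

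For item~2, fix any $\begin{bmatrix}A_i'&B_i'\end{bmatrix}\in\mathcal{S}'$. Membership in $\mathcal{M}$ gives $\image B_i'\subseteq\image B_r$ and $\image(A_r-A_i')\subseteq\image B_r$; from the previous paragraph $B_i'$ is full column rank, so $\operatorname{rank}B_i'=m=\dim\image B_r$ forces $\image B_i'=\image B_r$. Hence every column of $B_r$ lies in $\image B_i'$, giving a (necessarily invertible) $L_i$ with $B_i'L_i=B_r$, which is \eqref{pr:fixedLi}; and every column of $A_r-A_i'$ lies in $\image B_r=\image B_i'$, giving $K_i$ with $A_i'+B_i'K_i=A_r$, which is \eqref{pr:fixedKi}.

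The geometric restatement of the matching conditions and the polytope bookkeeping are routine. The substantive point is the construction itself: one must \emph{intersect} $\co(\mathcal{S})$ with $\mathcal{M}$ --- merely projecting the original corners onto $\mathcal{M}$ would in general push them outside $\co(\mathcal{S})$ and break item~1 --- while at the same time keeping $\begin{bmatrix}A_p&B_p\end{bmatrix}$ in the (relative) interior of the new, lower-dimensional polytope and carrying \Cref{pr:assumption_convex_hull}~(ii) through, so that each new $B_i'$ has column space exactly $\image B_r$ rather than merely contained in it. The non-square case $m<n$ is exactly what produces the drop in dimension and makes the interior/relative-interior distinction, and the rank bookkeeping, unavoidable.
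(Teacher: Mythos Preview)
Your construction is essentially the paper's: intersect $\co(\mathcal{S})$ with the affine locus on which the matching conditions are solvable and take the vertex set of that slice as $\mathcal{S}'$, inheriting item~1 and \Cref{pr:assumption_convex_hull}~(ii) from the inclusion, and recovering $K_i,L_i$ at each vertex from membership in that locus. Your version is in fact a bit sharper than the paper's --- you describe the locus as the genuinely affine set $\mathcal{M}=\{\Pi B=0,\ \Pi A=\Pi A_r\}$ (the paper's set $\mathcal{T}$, defined via the \emph{existence} of $K,L$, is $\mathcal{M}$ minus the rank-deficient $B$'s and hence not literally a hyperplane, though the two coincide inside $\co(\mathcal{S})$ by \Cref{pr:assumption_convex_hull}~(ii)), and you explicitly flag the ambient-versus-relative-interior issue for \Cref{pr:assumption_convex_hull}~(i), which the paper's proof leaves implicit.
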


\begin{proof}
    Let $(A_p, B_p)$ and $(A_r, B_r)$ be the matrix pairs that represent a plant, and a reference model, respectively. Assume that they satisfy~\Cref{pr:assumption_geometric}. Moreover, assume that there exists a set $\mathcal{S}$ with $N$ elements that satisfies~\Cref{pr:assumption_convex_hull}. The first, and trivial case, is if for every $i=1,\cdots,N$, there exist matrices $ K_i \in \mathbb{R}^{m \times n} $ and $ {L_i \in \mathbb{R}^{m \times m}} $ such that every corner model 
    $[A^\prime_{i}~~B^\prime_{i}] := [A_{i}~~B_{i}]$ satisfies \eqref{pr:fixedKi} and \eqref{pr:fixedLi}. In that case, we can define $\mathcal{S}^\prime = \mathcal{S}$, and we have satisfied~\Cref{pr:assumption_geometric,pr:assumption_convex_hull}.


    For the case when there exists some $j\in\{1,\cdots,N\}$ such that
    $[A^\prime_{j}~~B^\prime_{j}] := [A_{j}~~B_{j}] \in \mathcal{S}$ does not satisfy \eqref{pr:fixedKi} or \eqref{pr:fixedLi}, consider the closed convex hull of $\mathcal{S}$, i.e., $\co \left( \mathcal{S} \right).$
    The set $\co \left( \mathcal{S} \right)$ is a compact, convex polytope, and it is non-empty since $\begin{bmatrix}
        A_p & B_p
    \end{bmatrix}$ is an element of $\co \left( \mathcal{S} \right)$ by~\Cref{pr:assumption_convex_hull} (i). Next, take the set of matrices that satisfy~\Cref{pr:realKstar,pr:realLstar}: 
    \begin{align*}
        \mathcal{T} \coloneqq&
        \big\lbrace
            \begin{bmatrix}
                A_x & B_x
            \end{bmatrix}
            \in \mathbb{R}^{n\times(n+m)}: \\
            &A_x + B_x K = A_r, B_x L = B_r
            , L \in \mathbb{R}^{m\times m}, K \in \mathbb{R}^{m\times n} \big\rbrace. 
    \end{align*}
    The set $\mathcal{T}$ is a hyperplane, since it can be written as $n(n+m)$ linear equations (see Section 2.2.1~of~\cite{alma991001125329705160}), therefore it is a convex polytope. From~\Cref{pr:assumption_geometric}, we see that if we replace $K$ and $L$ by $K^*$ and $L^*$, respectively, the element $\begin{bmatrix}
        A_p & B_p
    \end{bmatrix}$ belongs to $\mathcal{T}$, which means $\mathcal{T}$ is non-empty. The intersection set
    \begin{equation*}
        \mathcal{P} \coloneqq \co \left( \mathcal{S} \right) \cap \mathcal{T}.
    \end{equation*}
    is also non-empty, since $\begin{bmatrix}
        A_p & B_p
    \end{bmatrix}$ is an element of both $\co \left( \mathcal{S} \right)$ and $\mathcal{T}$. Moreover, the intersection of a compact, convex polytope and a convex polytope is another compact, convex polytope (see Section 2.3.1~of~\cite{alma991001125329705160}), which in turn implies that we can obtain $N^\prime$ corner models $\begin{bmatrix}
        A_i^\prime & B_i^\prime
    \end{bmatrix}$ such that
    \begin{equation*}
        \mathcal{P} = \co \Big( \underbrace{\left\lbrace
            \begin{bmatrix}
                A_1^\prime & B_1^\prime
            \end{bmatrix}, 
            \cdots, 
            \begin{bmatrix}
                A^\prime_{N^\prime} & B^\prime_{N^\prime}
            \end{bmatrix} 
        \right\rbrace}_{\eqqcolon \mathcal{S}^\prime } \Big).
    \end{equation*}
    Every convex combination of the matrices $B_i^\prime$ can be written as a convex combination of the original $B_i$ matrices, and every convex combination of the $B_i$'s is full column rank, which implies that every convex combination of the $B_i^\prime$ is also full column rank. The vertices, edges and faces of $\mathcal{P}$ are obtained by intersecting the previous vertices, edges and faces from $\co ( \mathcal{S} )$ with the set $\mathcal{T}$. Since $\begin{bmatrix}
        A_p & B_p
    \end{bmatrix}$ is not on a vertex, edge or face of $\co ( \mathcal{S} )$ it cannot be on a vertex, edge or face of $\mathcal{P}$ and must be in the interior. Hence, the set $\mathcal{S}^\prime$ satisfies~\Cref{pr:assumption_convex_hull}, and also every element of $\mathcal{S}^\prime$ satisfies \eqref{pr:fixedKi} and \eqref{pr:fixedLi}.

\end{proof}

\begin{remark}
    The results provided in~\Cref{propCornerModel} deal with systems that do not have the same number of inputs as states; the special case of $n=m$ is solved trivially, i.e. we have $\mathcal{S} = \mathcal{S}^\prime$. Since every convex combination of $B_i$ is full column rank, that means that $B_i$ can be inverted for all $i \in \left\{1,\cdots,N\right\}$, and the gains can be obtained as
\begin{align*}
    K_i &= B_i^{-1} (A_r - A_i) \\
    L_i &= B_i^{-1} B_r.
\end{align*}
\demo
\end{remark}

\subsection{Obtaining the sets \texorpdfstring{$\mathcal{S}$}{S} and \texorpdfstring{$\mathcal{S}^\prime$}{S'}}
\label{subs:obstainingS}


The modeling uncertainty in~\eqref{pr:plant} is taken to be parametric uncertainty in the matrices in this state-space system model. There is an implicit assumption here that the state-space system model~\eqref{pr:plant} is derived from physical laws rather than from a state-space realization of an input-output system model.
If we consider the maximum range of values each entry of $A_p$, and $B_p$ can take we get that we can bound $a_{ij} \in [a_{ij\min}, a_{ij\max}]$ and $b_{ij} \in [b_{ij\min}, b_{ij\max}]$. When we consider the minimums and maximums of every entry we write the following matrices
\begin{align*}
    A_{p\min} \leq A_p \leq A_{p\max}, \\
    B_{p\min} \leq B_p \leq B_{p\max},
\end{align*}
where $A_{p\min}$ is a matrix where each entry $a_{ij}$ takes its minimum value, and $A_{p\max}$ is a matrix where each entry $a_{ij}$ takes its maximum value, $B_{p\min}$, and $B_{p\max}$ are matrices that take the minimums and maximums of each entry $b_{ij}$, respectively, and the inequality is considered entry-wise. Let $\mathcal{S}$ be the set of all possible system matrices $\begin{bmatrix} A_p & B_p \end{bmatrix}$ each entry of which is either the corresponding entry of 
$[A_{p\min}~~B_{p\min}]$ or the corresponding entry of  $[A_{p\max}~~B_{p\max}]$. Note that, by construction, $\mathcal{S}$ satisfies~\Cref{pr:assumption_convex_hull}~(i). The number of elements in $\mathcal{S}$ is $N = 2^{n(n + m)}$. If the entries can be parameterized by an uncertainty vector $\eta \in \mathbb{R}^{q}$, then $\mathcal{S}$ can be defined in terms of $2^q$ elements (see~\Cref{remarkParameterization}).

The na\"{i}ve corner model selection process described above guarantees that $ \begin{bmatrix} A_p & B_p \end{bmatrix} \in \interior \left( \co \left( S \right) \right) $. However, it is not evident that~\Cref{pr:assumption_convex_hull}~(ii) is satisfied. The results from~\cite{KolodziejczakBarbara1999Ccom} can be used to verify this condition. If~\Cref{pr:assumption_convex_hull}~(ii) is satisfied, then the last step is to use constructive procedure from the proof of~\Cref{propCornerModel} to obtain $\mathcal{S}^\prime$.

\subsection{Example}
\label{subs:modelexample}


In this example we consider a system with two states and one input. We assume that the plant only has uncertainty in the $B_p$ matrix. The \emph{unknown} input matrix to the system is $B_p = \begin{bmatrix}
    2 \\ 2
\end{bmatrix}$. The reference models input's matrix is $B_r = \begin{bmatrix}
    10 \\ 10
\end{bmatrix}$. Note that by taking $L^* = 5$ we can satisfy~\eqref{pr:realLstar}, and~\Cref{pr:assumption_geometric} is satisfied. The polytopic uncertainty for $B_p$ is given by
\begin{equation*}
    B_{p\min} =
    \begin{bmatrix}
        1 \\
        1
    \end{bmatrix},
    B_{p\max} =
    \begin{bmatrix}
        4 \\
        5
    \end{bmatrix}.
\end{equation*}
Using the selection procedure from~\Cref{subs:obstainingS} we can take all possible combinations of the minimums and maximums of every entry of $B_{p\min}$ and $B_{p\max}$ to get
\begin{align*}
    B_1 = 
    \begin{bmatrix}
        1 \\ 1
    \end{bmatrix},
    B_2 =
    \begin{bmatrix}
        1 \\ 5
    \end{bmatrix},
    B_3 = 
    \begin{bmatrix}
        4 \\ 5
    \end{bmatrix},
    B_4 = 
    \begin{bmatrix}
        4 \\ 1
    \end{bmatrix}.
\end{align*}
Since $A_p$ is completely known, we need to redefine the sets $\mathcal{S}$ and $\mathcal{T}$ as
\begin{align*}
    \mathcal{S} &= \left\lbrace
    B_1, 
    \cdots, 
    B_4
    \right\rbrace, \\
    \mathcal{T} &=
        \big\lbrace
            B_x \in \mathbb{R}^{2\times 1}: B_x L = B_r, L \in \mathbb{R}
        \big\rbrace.
\end{align*}
It is easily verifiable that $B_p\in\interior \left( \co \left( \mathcal{S} \right) \right)$, and that any convex combination of $B_i$ is full column rank. Nevertheless, if we consider $B_2$, $B_3$, or $B_4$ we cannot satisfy~\eqref{pr:fixedLi}. Using the proof of~\Cref{propCornerModel} we solve for the set $\mathcal{P}$ graphically (closed line segment going from $B_1$ to $B_2^\prime$), as shown in~\Cref{fig:projectioncorner} to obtain
\begin{align*}
    \mathcal{P} = \co ( \mathcal{S} ) \cap \mathcal{T} = 
    \co \Bigg(  
    \underbrace{\left\{
    \begin{bmatrix}
        1 \\ 1
    \end{bmatrix}, 
    \begin{bmatrix}
        4.5 \\ 4.5
    \end{bmatrix}
    \right\}}_{\mathcal{S}^\prime}
    \Bigg).
\end{align*}

Note that $\begin{bmatrix}
    A_p & B_p
\end{bmatrix} \in \interior \left( \mathcal{P} \right)$. In addition, every convex combination of $B_1$ and $B_2^\prime$ is full column rank, and we can satisfy~\eqref{pr:fixedLi} with $L_1 = 10$, and $L_{2^\prime} = 20/9$.

\begin{figure}
    \centering
    \includegraphics{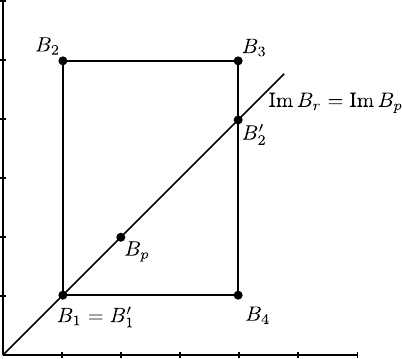}
    \caption{Graphic illustration of the projection process to select corner models.}
    \label{fig:projectioncorner}
\end{figure}
\section{Online Parameter Identification}
\label{sec:sysidentification}

Our multiple-model reference adaptive control (MMRAC) design will utilize a blending-based multiple model parameter identification (MMPI) scheme that generates estimates $\begin{bmatrix}
    \hat{A}_p(t) & \hat{B}_p(t)
\end{bmatrix}$ of $\begin{bmatrix}
    A_p & B_p
\end{bmatrix}$, as a weighted sum of the corner models. In this section, we provide the design, stability and convergence analysis of this MMPI scheme.

\subsection{Multiple-Model Parameter Identifier Design}

Filtering both sides of \eqref{pr:plant} by the linear filter $ \dfrac{1}{s + \lambda} $, where $ \lambda > 0 $ is a design parameter, we obtain the parametric model
\begin{align}
    z \left( t \right) &\stackrel{\lambda}{=} \Theta_p
    \begin{bmatrix}
        \phi_1 \left( t \right) \\
        \phi_2 \left( t \right)
    \end{bmatrix}
    \eqqcolon \Theta_p \Phi (t),
    \label{des:parametric_model}
\end{align}
where $\phi_1 \left( t \right), z \left( t \right) \in \mathbb{R}^{n}$, $\phi_2 \left( t \right) \in \mathbb{R}^{m}$ are generated by the filters
\begin{align}
    \begin{bmatrix}
        \dot{\phi}_1 (t) \\
        \dot{\phi}_2 (t)
    \end{bmatrix}
    &= -\lambda
    \begin{bmatrix}
        \phi_1 (t) \\
        \phi_2 (t)
    \end{bmatrix}
    +
    \begin{bmatrix}
        x_p (t) \\
        u (t)
    \end{bmatrix},\:
    \begin{bmatrix}
        \phi_1 (0) \\
        \phi_2 (0)
    \end{bmatrix}
    =
    \begin{bmatrix}
        0 \\
        0
    \end{bmatrix}, \nonumber \\
    z (t) &= \dot{\phi}_1 (t) = - \lambda \phi_1 (t) + x_p (t).
    \label{des:filtered_plant}
\end{align}
The relation \eqref{des:parametric_model} can be verified by taking time derivatives of both sides and taking the difference, i.e., \eqref{des:filtered_plant} implies for $e_z(t) = z(t) - \Theta_p \Phi(t)$ that
\begin{align}
    \dot{e}_z (t) &= \dot{z} (t) - \Theta_p \dot{\Phi} (t) \nonumber \\
    &= -\lambda \dot{\phi}_1(t) + \dot{x}_p - \Theta_p \begin{bmatrix}
        \dot{\phi}_1(t) \\
        \dot{\phi}_2(t)
    \end{bmatrix} \nonumber \\
    &= -\lambda \dot{\phi}_1(t) + \Theta_p \begin{bmatrix}
        x_p(t) \\
        u(t)
    \end{bmatrix}
    - \Theta_p
    \left(
    -\lambda \begin{bmatrix}
        \phi_1(t) \\
        \phi_2(t)
    \end{bmatrix}
    +
    \begin{bmatrix}
        x_p(t) \\
        u(t)
    \end{bmatrix}
    \right) \nonumber \\
    &= -\lambda\left( -\lambda\phi_1(t) + x_p(t) \right) + \lambda\Theta_p \Phi(t) \nonumber \\
    &= -\lambda\left( -\lambda\phi_1(t) + x_p(t) - \Theta_p \Phi(t) \right) \nonumber \\
    &= -\lambda e_z(t), \label{eq:e_z}
\end{align}
i.e., $e_z(t)$ is an exponentially decaying signal.

For each of the fixed models, define
\begin{align}
    z_i \left( t \right) &\coloneqq \Theta_i 
    \begin{bmatrix}
        \phi_1 \left( t \right) \\
        \phi_2 \left( t \right)
    \end{bmatrix} = \Theta_i \Phi (t),\: \forall i \in \left\lbrace 1, \cdots, N \right\rbrace.
    \label{des:filtered_fixed_models}
\end{align}
Then, the filtered state estimation error for each of the $ N $ fixed models is defined as
\begin{equation}
	\varepsilon_i \left( t \right) \coloneqq \dfrac{z \left( t \right) - z_i \left( t \right)}{m_s^2 (t)}, \forall i \in \left\lbrace 1,\cdots, N \right\rbrace,
	\label{des:filtered_error}
\end{equation}
where $ m_s^2 (t) \coloneqq 1 + \alpha \|\Phi(t)\|^2$, $ \alpha > 0 $, is a normalization signal which guarantees that $ \dfrac{\Phi (t)}{m_s (t)} $ is bounded. By \eqref{pr:convexity}, \eqref{des:parametric_model}--\eqref{des:filtered_fixed_models}, for any $w=\left[ w_1,\dots,w_{N} \right]^\top  \in \mathcal{W}$, we obtain
\begin{align}
    z \left( t \right) &= \sum_{i=1}^{N} w_i z_i \left( t \right)+e_z(t).
    \label{des:z_zi}
\end{align}
Using $ \sum_{i=1}^{N} w_i = 1 $, \eqref{des:z_zi} further implies
\begin{align}
    e_z(t)= \sum_{i=1}^{N} w_i z \left( t \right) - \sum_{i=1}^{N} w_i z_i \left( t \right)
    &= \sum_{i=1}^{N} w_i \left( z \left( t \right) - z_i \left( t \right) \right) \nonumber \\
    &= \sum_{i=1}^{N} w_i \varepsilon_i \left( t \right) m_s^2 (t). \label{des:zeroerror}
\end{align}
Adding $-\varepsilon_{N} \left( t \right) m_s^2 (t)$ to both sides of \eqref{des:zeroerror}, we obtain
\begin{align*}
    e_z(t)-\varepsilon_{N} \left( t \right) m_s^2 (t) 
    &= \sum_{i=1}^{N-1} w_i \left( \varepsilon_i \left( t \right) - \varepsilon_{N} \left( t \right) \right) m_s^2 (t),
\end{align*}
which implies, for any $w=\left[ w_1,\dots,w_{N} \right]^\top  \in \mathcal{W}$, that
\begin{equation}
    e_z(t)-\varepsilon_{N} \left( t \right)  = \sum_{i=1}^{N - 1} w_i \left( \varepsilon_i \left( t \right) - \varepsilon_{N} \left( t \right) \right).
    \label{des:convex_error}
\end{equation}
Defining the $n \times (N-1)$ time-varying matrix
\begin{equation}
    E \left( t \right) \coloneqq \begin{bmatrix}
	\varepsilon_1 \left( t \right) - \varepsilon_{N} \left( t \right) & \cdots & \varepsilon_{N-1}  \left( t \right) - \varepsilon_{N} \left( t \right)
    \end{bmatrix}, \label{des:defineE}
\end{equation}
we can rewrite \eqref{des:convex_error} in matrix form as
\begin{equation}
	E \left( t \right) \bar{w}  = e_z(t) -\varepsilon_{N} \left( t \right)
	\label{des:convex_error_matrix}
\end{equation}
for any $\left [\bar{w}^\top,w_N\right]^\top=\left[ w_1,\dots,w_{N-1},w_{N} \right]^\top  \in \mathcal{W}$.
\Cref{des:convex_error_matrix} motivates using the following recursive adaptive law \cite{ioannou_adaptive_2006} to generate estimate $\hat{w} (t) = [\hat{\bar{w}}^\top \left( t \right), \hat{w}_{N} \left( t \right)]^\top \in \mathbb{R}^{N}$, 
such that $\lim_{t\rightarrow\infty} \sum_{i = 1}^{N} \hat{w}_i (t)
            \begin{bmatrix}
                A_i & B_i
            \end{bmatrix} = \begin{bmatrix}
                A_p & B_p
            \end{bmatrix}$.
\begin{align}
    \begin{split}
        \dot{\hat{\bar{w}}} \left( t \right) &= -\Gamma E^\top \left( t \right) E \left( t \right) \hat{\bar{w}} \left( t \right) - \Gamma E^\top \left( t \right) \varepsilon_{N} \left( t \right), \\
	    \hat{w}_{N} \left( t \right) &= 1 - \sum_{i=1}^{N-1} \hat{w}_i \left( t \right),
    \end{split}
	\label{des:gradient_algorithm}
\end{align}
where the tuning parameter $\Gamma \in \mathbb{R}^{(N-1) \times (N-1)} $ is a symmetric positive definite matrix. Let $w^*$ be some element in $\mathcal{W}$, and based on \eqref{des:gradient_algorithm}, the estimation error $ \tilde{\bar{w}} \left( t \right) \coloneqq \hat{\bar{w}} \left( t \right) - \bar{w}^*$ satisfies
\begin{align}
    \dot{\tilde{\bar{w}}} \left( t \right) &= \dot{\hat{\bar{w}}} \left( t \right) - \dot{\bar{w}}^* = \dot{\hat{\bar{w}}} \left( t \right) \nonumber \\
    &= - \Gamma E^\top \left( t \right) E \left( t \right) \hat{\bar{w}} \left( t \right) - \Gamma E^\top \left( t \right) \varepsilon_{N} \left( t \right). \label{des:estimation_error_hat}
\end{align}
Pre-multiplying \eqref{des:convex_error_matrix} by $\Gamma E^\top(t)$ and moving all the terms to the left hand side, we obtain
\begin{equation}
	\Gamma E^\top(t)E \left( t \right) \bar{w}+\Gamma E^\top(t)\varepsilon_{N} \left( t \right)-\Gamma E^\top(t) e_z(t)  =  0
	\label{des:convex_error_matrix2}
\end{equation}
Adding \eqref{des:convex_error_matrix2} to \eqref{des:estimation_error_hat}, we further obtain
\begin{align}
    \dot{\tilde{\bar{w}}} \left( t \right) & = -\Gamma E^\top \left( t \right) E \left( t \right) \tilde{\bar{w}} \left( t \right)-\Gamma E^\top(t) e_z(t). \label{des:estimation_error}
\end{align} 

\subsection{Stability and Convergence of the Identification Process}

In this subsection, we establish the stability and convergence properties of the estimation scheme \eqref{des:gradient_algorithm} utilizing the following lemma.

\begin{lemma}[{See~\cite{ioannou_adaptive_2006}}, Barbalat's Lemma]
	For a function $f:[0,\infty)\rightarrow \mathbb{R}$, if $ f,\: \dot{f} \in \mathcal{L}_\infty $ and $ {f \in \mathcal{L}_p} $, for some $ p \in \left[ 1, \infty \right) $, then $ \lim\limits_{t \rightarrow \infty} f \left( t \right) = 0 $. 
	\label{int:lemma}
\end{lemma}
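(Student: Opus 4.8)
The plan is to prove the statement by contradiction, exploiting the fact that a bounded derivative prevents $f$ from changing quickly: if $f$ fails to tend to zero, it must remain bounded away from zero on infinitely many disjoint time intervals of a fixed positive length, which forces $\int_0^\infty |f|^p$ to diverge and contradicts $f \in \mathcal{L}_p$. So the argument is a short packaging of "slow variation plus integrability implies decay".

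Concretely, I would suppose $f(t) \not\to 0$, so that there is an $\varepsilon > 0$ and a sequence $t_k \to \infty$ with $|f(t_k)| \geq \varepsilon$ for all $k$; after passing to a subsequence I may assume $t_{k+1} - t_k \geq 1$. Let $M := \sup_{t \geq 0} |\dot f(t)|$, which is finite by hypothesis (if $M = 0$ then $f$ is constant and $f \in \mathcal{L}_p$ forces $f \equiv 0$, so I may assume $M > 0$). Set $\delta := \min\{\varepsilon/(2M),\, 1/2\}$. By the fundamental theorem of calculus, for every $t$ in the interval $I_k := [t_k, t_k + \delta]$,
\[
  |f(t) - f(t_k)| = \Bigl| \int_{t_k}^{t} \dot f(\tau)\,\D\tau \Bigr| \leq M\delta \leq \tfrac{\varepsilon}{2},
\]
hence $|f(t)| \geq |f(t_k)| - \varepsilon/2 \geq \varepsilon/2$ on each $I_k$; and since $\delta \leq 1/2 < 1 \leq t_{k+1} - t_k$, the intervals $I_k$ are pairwise disjoint subsets of $[0,\infty)$. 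Consequently,
\[
  \int_0^\infty |f(\tau)|^p\,\D\tau \;\geq\; \sum_{k} \int_{I_k} |f(\tau)|^p\,\D\tau \;\geq\; \sum_{k} \delta\,(\varepsilon/2)^p \;=\; \infty,
\]
contradicting $f \in \mathcal{L}_p$. Therefore $\lim_{t\to\infty} f(t) = 0$.

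The main obstacles here are organizational rather than deep: one must thin the sequence $\{t_k\}$ so that the ``bumps'' of $f$ around successive $t_k$ do not overlap, and one must choose $\delta$ uniformly (independent of $k$) so that the divergence of the series is transparent. An alternative route I could pursue is to observe that $\dot f \in \mathcal{L}_\infty$ makes $f$ globally Lipschitz, hence uniformly continuous, and then apply the classical Barbalat lemma to $g := |f|^p$ — whose derivative $p|f|^{p-1}\operatorname{sgn}(f)\,\dot f$ is bounded because $p \geq 1$ and $f$ is bounded — using $\int_0^\infty g\,\D\tau < \infty$. However, the direct contradiction argument above is self-contained and does not require invoking the classical statement separately, so I would present that version.
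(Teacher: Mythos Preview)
Your proof is correct. The contradiction argument is clean: thinning the sequence so that $t_{k+1}-t_k\geq 1$ and choosing $\delta=\min\{\varepsilon/(2M),\,1/2\}$ uniformly makes the disjointness of the intervals $I_k$ and the divergence of $\sum_k \delta(\varepsilon/2)^p$ transparent, and the edge case $M=0$ is handled correctly.

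There is nothing to compare against, however: the paper does not supply a proof of this lemma. It is stated as a quoted result from the reference (Ioannou and Sun) and is invoked later only as a black box, in the proof of Theorem~1, to conclude $\lim_{t\to\infty}E(t)\tilde{\bar w}(t)=0$ from the boundedness and square-integrability established in parts (i) and (ii). So your write-up goes beyond what the paper itself contains; it supplies a self-contained justification where the paper simply cites the literature. The alternative route you sketch (pass to $g=|f|^p$, observe it is Lipschitz because $f$ is bounded and Lipschitz, and invoke the classical Barbalat statement) is also valid, but your direct argument is preferable here precisely because it does not defer to another version of the same lemma.
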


The main stability and convergence properties are established in the following theorems and lemmas below.
\begin{theorem}
	Consider the system \eqref{pr:plant} with definitions \eqref{des:filtered_error}, and \eqref{des:defineE}. Let \Cref{pr:assumption_convex_hull} hold, $w^*=\left [\bar{w}^{*\top},w^*_N\right]^\top \in \mathcal{W}$ be an arbitrary vector within the set defined in \eqref{pr:convexity}, and $ \tilde{\bar{w}}(t) \coloneqq \hat{\bar{w}}(t) - \bar{w}^*$.  For any initial condition $ \hat{\bar{w}} (0) \in \mathbb{R}^{N-1} $, the estimation scheme \eqref{des:gradient_algorithm} guarantees that:
	\begin{enumerate}[label=(\roman*)]
		\item $\hat{\bar{w}}$, $\tilde{\bar{w}}$, $\dot{\tilde{\bar{w}}}$ and $ E $ are bounded signals. \label{des:theorem1i}
		\item $ E \tilde{\bar{w}} $ and $\dot{\tilde{\bar{w}}}$ are square integrable.  \label{des:theorem1ii} 
        \item $ \lim_{t\rightarrow\infty} E(t)\tilde{\bar{w}}(t) = 0 $. \label{des:theorem1iii}
 \item $\hat{\bar{w}}(t)$ asymptotically converges to a constant vector $\bar{\bar{w}}\in\mathbb{R}^{N-1}$. \label{des:theorem1iv}
	\end{enumerate}
	\label{des:theorem1}
\end{theorem}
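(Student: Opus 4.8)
The plan is to run the standard Lyapunov-plus-Barbalat analysis of a normalized gradient law (as in~\cite{ioannou_adaptive_2006}) on the error dynamics \eqref{des:estimation_error}, $\dot{\tilde{\bar w}} = -\Gamma E^\top E\,\tilde{\bar w} - \Gamma E^\top e_z$, leaning on two structural facts. First, even though $\Phi$ need not be bounded, $E$ \emph{is}: by \eqref{des:filtered_error} and $z_i=\Theta_i\Phi$, the $i$-th column of $E$ is $(\Theta_N-\Theta_i)\Phi/m_s^2$, and $\|\Phi/m_s^2\|\le 1/(2\sqrt\alpha)$ since $m_s^2=1+\alpha\|\Phi\|^2$; hence $\|E(t)\|\le c_E$ for a constant $c_E$ depending only on the corner models and $\alpha$. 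Second, $e_z$ decays exponentially by \eqref{eq:e_z} (with rate $\lambda$ and $e_z(0)=x_{p0}$), so $e_z\in\mathcal{L}_1\cap\mathcal{L}_2\cap\mathcal{L}_\infty$.

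For parts (i) and (ii), take $V=\tfrac12\tilde{\bar w}^\top\Gamma^{-1}\tilde{\bar w}$, so that $\dot V = -\|E\tilde{\bar w}\|^2-(E\tilde{\bar w})^\top e_z$; Young's inequality gives $\dot V\le -\tfrac12\|E\tilde{\bar w}\|^2+\tfrac12\|e_z\|^2$. Integrating and using $e_z\in\mathcal{L}_2$ yields $V\in\mathcal{L}_\infty$, hence $\tilde{\bar w}$ and $\hat{\bar w}$ are bounded, and $\int_0^\infty\|E\tilde{\bar w}\|^2\,\D t<\infty$, i.e.\ $E\tilde{\bar w}\in\mathcal{L}_2$. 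Feeding boundedness of $E,\tilde{\bar w},e_z$ and $E\tilde{\bar w},e_z\in\mathcal{L}_2$ back into \eqref{des:estimation_error} shows $\dot{\tilde{\bar w}}\in\mathcal{L}_\infty\cap\mathcal{L}_2$. (One can instead use $V+\tfrac1{8\lambda}\|e_z\|^2$ and complete the square to get the cleaner $\tfrac{\D}{\D t}(V+\tfrac1{8\lambda}\|e_z\|^2)\le-\|E\tilde{\bar w}+\tfrac12 e_z\|^2\le 0$, which makes the Lyapunov function monotone — convenient for part (iv).)

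For part (iii), apply Barbalat's Lemma (\Cref{int:lemma}) to $f=E\tilde{\bar w}$: it lies in $\mathcal{L}_2$ and, since $E$ and $\tilde{\bar w}$ are bounded, in $\mathcal{L}_\infty$; it remains to verify $\dot f=\dot E\,\tilde{\bar w}+E\,\dot{\tilde{\bar w}}\in\mathcal{L}_\infty$. The summand $E\dot{\tilde{\bar w}}$ is bounded, so the content is boundedness of $\dot E$; differentiating $E=[(\Theta_N-\Theta_i)\Phi/m_s^2]_i$ reduces this to boundedness of suitably normalized versions of $\dot\Phi=-\lambda\Phi+[x_p^\top\ u^\top]^\top$ — the step where the normalization $m_s$ must be exploited most carefully. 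With $\dot f$ bounded, Barbalat yields $E(t)\tilde{\bar w}(t)\to 0$.

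Part (iv) is the main obstacle. From \eqref{des:estimation_error}, $\dot{\hat{\bar w}}=-\Gamma E^\top(E\tilde{\bar w})-\Gamma E^\top e_z$; the second summand is in $\mathcal{L}_1$ (bounded $\times$ exponential), but the first is only known to be in $\mathcal{L}_2$, which does not by itself force $\int_0^t\dot{\hat{\bar w}}$ to converge. What \emph{is} immediate is that $V$ converges to a limit (since $\dot V\in\mathcal{L}_1$, both terms in $\dot V$ being $\mathcal{L}_1$), so $\|\tilde{\bar w}(t)\|_{\Gamma^{-1}}$ converges; the real work is upgrading this to convergence of $\tilde{\bar w}(t)$ itself. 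I would try either (a) sharpening the first summand to $\mathcal{L}_1$ using the factorization $E\tilde{\bar w}=M(\tilde{\bar w})\Phi/m_s^2$ with $M(v)=\sum_{i=1}^{N-1}v_i(\Theta_N-\Theta_i)$, together with the $\mathcal{L}_2$ bound; or (b) a variation-of-constants argument: with $\Psi(t,\tau)$ the transition matrix of $\dot y=-\Gamma E^\top Ey$, which is non-expansive in the $\Gamma^{-1}$-norm, write $\tilde{\bar w}(t)=\Psi(t,0)\tilde{\bar w}(0)-\int_0^t\Psi(t,\tau)\Gamma E^\top(\tau)e_z(\tau)\,\D\tau$ and, using $\Gamma E^\top e_z\in\mathcal{L}_1$ and dominated convergence, reduce the claim to showing $\Psi(t,\tau)$ converges as $t\to\infty$ for each fixed $\tau$ — equivalently, that every solution of $\dot y=-\Gamma E^\top Ey$ converges. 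I expect this last point, which fails for a general positive-semidefinite $E^\top E$ but should hold here because of the rigid structure of $E$ (a fixed matrix acting on the single normalized regressor $\Phi/m_s^2$), to be where the essential difficulty lies.
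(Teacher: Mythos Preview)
Your treatment of (i)--(iii) is essentially the paper's. The paper uses exactly your parenthetical alternative: it takes the augmented Lyapunov function $V_1=\tfrac12\tilde{\bar w}^\top\Gamma^{-1}\tilde{\bar w}+\tfrac{1}{2\lambda}\|e_z\|^2$, completes the square to
\[
\dot V_1=-\tfrac12\|E\tilde{\bar w}\|^2+\tfrac12\|e_z\|^2-\tfrac12\|E\tilde{\bar w}+e_z\|^2,
\]
reads off boundedness and square-integrability of $E\tilde{\bar w}$ and $\dot{\tilde{\bar w}}$, and then invokes Barbalat for (iii). Your extra remark that Barbalat on $f=E\tilde{\bar w}$ actually requires checking $\dot E\in\mathcal{L}_\infty$ via the normalization is a legitimate detail the paper simply skips.

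On (iv) you have not missed any idea that the paper provides. The paper's entire argument for (iv) is the single sentence ``Items (i) and (ii) further imply that $\lim_{t\to\infty}\tilde{\bar w}(t)$ \dots exist[s] and [is] finite,'' i.e., it asserts that $\dot{\tilde{\bar w}}\in\mathcal{L}_2\cap\mathcal{L}_\infty$ forces $\tilde{\bar w}(t)$ to converge. As you anticipated, that implication is false in general (e.g.\ $t\mapsto\sin\log(1+t)$ has derivative in $\mathcal{L}_2\cap\mathcal{L}_\infty$ but does not converge), so the paper's proof of (iv) is, at best, no more complete than your sketch. Your observation that $V$ (hence $\|\tilde{\bar w}\|_{\Gamma^{-1}}$) converges because $\dot V\in\mathcal{L}_1$ is correct and is already more than the paper states; your proposed routes (a)/(b), exploiting that every column of $E$ factors through the single normalized regressor $\Phi/m_s^2$, are reasonable directions, but the paper does not pursue any such structural argument --- it just claims the conclusion.
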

\begin{proof}
    Consider the Lyapunov-like function
    \begin{equation}
        V_1(\tilde{\bar{w}}(t), e_z(t)) = \frac{1}{2}\tilde{\bar{w}}^\top (t)\Gamma^{-1} \tilde{\bar{w}} (t) + \frac{1}{2\lambda}e_z^\top(t)e_z(t). 
        \label{des:V1}
    \end{equation}
    Taking the time derivative of~\eqref{des:V1} along~\eqref{des:estimation_error}, we have
 \begin{align}
            \dfrac{\D V_1(t)}{\D t} &= \tilde{\bar{w}}^\top(t) \Gamma^{-1} \dot{\tilde{\bar{w}}}(t) + \dfrac{1}{\lambda} e_z^\top(t) \dot{e}_z(t) \nonumber \\
            &= -\tilde{\bar{w}}^\top(t)\Gamma^{-1}\Gamma E^\top(t) E(t) \tilde{\bar{w}}(t) \nonumber \\
            &- 
            \tilde{\bar{w}}^\top(t) \Gamma^{-1}\Gamma E^\top(t) e_z(t) + \dfrac{1}{\lambda} e_z^\top\dot{e}_z \nonumber \\
            &= -\tilde{\bar{w}}^\top(t) E^\top(t) E(t) \tilde{\bar{w}}(t) - 
            \tilde{\bar{w}}^\top(t) E^\top(t) e_z(t) + \dfrac{1}{\lambda} e_z^\top\dot{e}_z \nonumber \\
            &= - \dfrac{1}{2}\|E \tilde{\bar{w}}\|^2 + \dfrac{1}{2} \| e_z(t) \|^2 - \dfrac{1}{2} \| E(t)\tilde{\bar{w}}(t) + e_z(t) \|^2.
            \label{des:V1dot}
        \end{align}
    Since $e_z(t)$ is an exponentially decaying signal, \Cref{des:V1dot} implies that $\dot{V}_1$ is bounded and there exists a time instant $t_1>0$ such that, for all $t \geq t_1$, 
    $\dot{V}_1(t) \leq 0$. Hence.  $V_1$, $\tilde{\bar{w}}$, and $\hat{\bar{w}}$ are bounded. Because of normalization~\eqref{des:filtered_error}, $\varepsilon_i$ terms are bounded and hence $E$ is bounded, which together with \eqref{des:estimation_error} also implies that $\dot{\tilde{\bar{w}}}$ is bounded,  finishing the proof of (i).
    
    Since $V_1(t)$ is always positive, bounded, and decaying, the integral of \eqref{des:V1dot} for $t=0$ to $\infty$ is finite, and hence  $ E \tilde{\bar{w}} $ and $E \tilde{\bar{w}}+e_z(t)$ are square integrable.
    Since $E$ is bounded, this, together with \eqref{des:estimation_error}, further implies that $\dot{\tilde{\bar{w}}}$ is square integrable, completing the proof of (ii).

    Items (i) and (ii) together with Barbalat's Lemma imply (iii). Items (i) and (ii) further imply that $\lim_{t\rightarrow\infty}\tilde{\bar{w}}(t)$ and, hence, $\lim_{t\rightarrow\infty}\hat{\bar{w}}(t)$ exist and are finite, proving (iv).
\end{proof}

\Cref{des:theorem1} implies that $ \| E(t)\tilde{\bar{w}}(t) \| $ asymptotically converges to zero, but this does not mean that $ \tilde{\bar{w}}(t) $ converges to the set $\mathcal{W}$. We can now state, and prove, the main result of the identification process.

\begin{theorem}
Consider the system \eqref{pr:plant} with definitions \eqref{des:filtered_error}, and \eqref{des:defineE}, and the estimation scheme \eqref{des:gradient_algorithm}. Let \Cref{pr:assumption_convex_hull} hold, $w^*=\left [\bar{w}^{*\top},w^*_N\right]^\top \in \mathcal{W}$ be an arbitrary vector within the set defined in \eqref{pr:convexity}, $ \tilde{\bar{w}}(t) \coloneqq \hat{\bar{w}}(t) - \bar{w}^*$, and  $\Phi$ be bounded and satisfy the PE condition~\eqref{eq:PEinequality}. Then, for any initial condition $ \hat{\bar{w}} (0) \in \mathbb{R}^{N-1} $, the estimated system matrix $ \sum_{i=1}^{N} \hat{w}_i(t) \Theta_i $ asymptotically converges to $ \Theta_p $.
	\label{des:theorem2}
\end{theorem}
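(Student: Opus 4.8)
The plan is to connect the estimation-error dynamics \eqref{des:estimation_error} to a standard persistence-of-excitation argument. First I would show that the PE condition on $\Phi$ together with \eqref{des:filtered_fixed_models} implies that the regressor matrix $E(t)$ built from the $\varepsilon_i - \varepsilon_N$ is itself persistently exciting in the appropriate sense. The key observation is that $z(t) - z_i(t) = (\Theta_p - \Theta_i)\Phi(t) + e_z(t)$, so modulo the exponentially decaying term $e_z$, each column of $E(t)$ is (a normalized version of) $(\Theta_N - \Theta_i)\Phi(t)$. Since the corner models are affinely independent in the sense forced by \Cref{pr:assumption_convex_hull} (the true plant is in the \emph{interior} of the hull, hence the differences $\Theta_i - \Theta_N$ span the relevant subspace), and since $\Phi$ is PE, I would argue that $\int_t^{t+T} E^\top(\tau)E(\tau)\,\D\tau \succeq \beta\, \mathbb{I}$ for some $\beta>0$ and all $t$, possibly after absorbing the $e_z$ contribution into a slightly larger window or a "$t \geq t_0$" qualifier. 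The normalization by $m_s^2(t)$ is harmless here because $\Phi/m_s$ is bounded and $m_s$ is bounded below by $1$, and because the paper assumes $\Phi$ (not merely $\Phi/m_s$) is bounded, so $m_s$ is also bounded above.

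Second, armed with PE of $E$, I would invoke the standard exponential-stability result for gradient/least-squares-type adaptive laws of the form $\dot{\tilde{\bar w}} = -\Gamma E^\top E\,\tilde{\bar w} + (\text{exp. decaying disturbance})$; see~\cite{ioannou_adaptive_2006}. The homogeneous part $\dot{\tilde{\bar w}} = -\Gamma E^\top(t) E(t)\,\tilde{\bar w}$ is uniformly exponentially stable when $E$ is PE, and the forcing term $-\Gamma E^\top(t) e_z(t)$ decays exponentially (by \eqref{eq:e_z}, $e_z \equiv 0$) while $E$ is bounded (by \Cref{des:theorem1}\ref{des:theorem1i}). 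Hence $\tilde{\bar w}(t) \to 0$ exponentially. Since $\hat w_N = 1 - \sum_{i=1}^{N-1}\hat w_i$ and likewise $w_N^* = 1 - \sum_{i=1}^{N-1} w_i^*$, convergence of $\tilde{\bar w}$ to zero gives $\hat w(t) \to w^*$, and therefore $\sum_{i=1}^N \hat w_i(t)\Theta_i \to \sum_{i=1}^N w_i^*\Theta_i = \Theta_p$ by the definition of $\mathcal{W}$ in \eqref{pr:convexity}.

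The main obstacle I anticipate is the first step: rigorously transferring the PE property from $\Phi$ to $E$. Two subtleties must be handled. (a) The map $\Phi \mapsto (\varepsilon_1 - \varepsilon_N, \dots, \varepsilon_{N-1}-\varepsilon_N)$ factors through the linear map with blocks $\Theta_N - \Theta_i$; to conclude PE of $E$ one needs this composite linear map to have full column rank (equivalently, that the $\Theta_i - \Theta_N$ are linearly independent), which is exactly where the interior condition of \Cref{pr:assumption_convex_hull} and the genericity obtained from the corner-model construction (\Cref{propCornerModel}) must be used — if the corners were affinely dependent, $w^*$ would not be unique and convergence of $\hat{\bar w}$ to a \emph{specific} $\bar w^*$ could fail, though convergence of $\sum \hat w_i \Theta_i$ to $\Theta_p$ might still hold via the $E\tilde{\bar w}\to 0$ route. (b) The $e_z(t)$ term inside each $\varepsilon_i$ contaminates $E(t)$; since it is the \emph{same} $e_z$ in every $\varepsilon_i$, it cancels in the differences $\varepsilon_i - \varepsilon_N$, so in fact $E(t) = \frac{1}{m_s^2(t)}\bigl[(\Theta_N - \Theta_1)\Phi \;\cdots\; (\Theta_N-\Theta_{N-1})\Phi\bigr]$ exactly, with no $e_z$ term — this is a pleasant simplification worth checking carefully, as it means the disturbance enters only through the forcing term $-\Gamma E^\top e_z$ in \eqref{des:estimation_error} and not through $E$ itself. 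If the linear-independence of the corner differences does not hold in general, I would fall back to proving $\sum_i \hat w_i(t)\Theta_i \to \Theta_p$ directly: $E(t)\tilde{\bar w}(t) \to 0$ from \Cref{des:theorem1}\ref{des:theorem1iii}, and $E(t)\bar w$ encodes $\Theta_p - \sum \hat w_i \Theta_i$ up to the exponentially decaying $e_z$, so combined with boundedness and PE one still extracts the conclusion.
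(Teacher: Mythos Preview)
Your primary route---showing PE of $E$ and then invoking exponential stability of the gradient law to get $\tilde{\bar w}(t)\to 0$---attempts to prove more than the theorem asserts, and this is where it breaks. The theorem only claims $\sum_i \hat w_i(t)\Theta_i \to \Theta_p$, not $\hat w(t)\to w^*$; indeed $\mathcal{W}$ in \eqref{pr:convexity} is a \emph{set}, generally not a singleton, precisely because the paper allows an arbitrary number $N$ of corner models that need not be affinely independent. Being in the interior of $\co(\mathcal{S})$ does \emph{not} force the vertices to be affinely independent (a square already shows this). When the differences $\Theta_i-\Theta_N$ are linearly dependent---which is unavoidable once $N-1 > n(n+m)$ and is in any case permitted by \Cref{pr:assumption_convex_hull}---there exist nonzero $v$ with $E(t)v\equiv 0$, so $\int E^\top E\,\D\tau$ can never be bounded below by $\beta I_{N-1}$ and your PE-of-$E$ step fails outright. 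Your observation that $e_z$ cancels in the columns of $E$ is correct and useful, but it does not rescue this step.

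Your fallback paragraph is the right instinct and is essentially the paper's argument, but you are missing the ingredient that makes it work. From $E(t)\tilde{\bar w}(t)\to 0$ alone, together with PE of $\Phi$, you cannot conclude that $Q(t):=\sum_{i=1}^{N-1}\tilde w_i(t)(\Theta_N-\Theta_i)\to 0$: a time-varying $Q(t)$ could annihilate $\Phi(t)$ instant by instant while remaining bounded away from zero. The paper closes this gap by invoking \Cref{des:theorem1}\ref{des:theorem1iv}, which guarantees that $\hat{\bar w}(t)$, and hence $Q(t)$, converges to some constant $\bar Q$. Then $E\tilde{\bar w}=Q\Phi/m_s^2\to 0$ together with boundedness of $\Phi$ gives $\bar Q\,\Phi(t)\to 0$, hence $\bar Q\,\Phi(t)\Phi^\top(t)\bar Q^\top\to 0$, and integrating over a PE window forces $\bar Q=0$. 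Since $-Q(t)=\sum_i\hat w_i(t)\Theta_i-\Theta_p$, this is exactly the claim. So the missing idea in your fallback is the prior use of the already-established convergence of $\hat{\bar w}$ to a limit before applying PE.
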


\begin{proof}
    Let $Q(t) \coloneqq \sum_{i = 1}^{N-1} (\tilde{w}_i(t)(\Theta_N-\Theta_i)) \in \mathbb{R}^{n\times (n+m)}$. It is easily verifiable that if $Q(t) = 0$, then $ \sum_{i=1}^{N} \hat{w}_i(t) \Theta_i = \Theta_p $, and $\hat{w}(t)\in\mathcal{W}$. Theorem \ref{des:theorem1} (iv) implies that $Q(t)$ asymptotically converges to  a constant matrix $\bar{Q}\in \mathbb{R}^{n\times (n+m)}$.
    Hence, to establish that
    $ \sum_{i=1}^{N} \hat{w}_i(t) \Theta_i $ asymptotically converges to $ \Theta_p $,
    we will show that $\bar{Q}=0$.
    
    Expressing~\Cref{des:theorem1}~\ref{des:theorem1iii} in summation form, we get
    \begin{align}
    \label{eq:Thm2proof}
        \lim_{t\rightarrow \infty} E(t) \tilde{\bar{w}}(t) &= \lim_{t\rightarrow \infty}  Q(t) \dfrac{\Phi(t)}{1 + \alpha \|\Phi(t)\|^2}
        \nonumber \\
        &= \lim_{t\rightarrow \infty} \bar{Q} \dfrac{\Phi(t)}{1 + \alpha \|\Phi(t)\|^2}
        =0.
    \end{align}
    Since $\Phi$ is assumed to be bounded, \eqref{eq:Thm2proof} implies that
     \begin{align*}
    \label{eq:Thm2proof2}
    \lim_{t\rightarrow \infty} \bar{Q} \Phi(t)=0.
    \end{align*}
    and hence
    \begin{align}
    \lim_{t\rightarrow \infty} \bar{Q} \Phi(t)\Phi^\top(t)\bar{Q}^\top=0.
    \end{align}
    Since $\Phi$ satisfies the PE condition~\eqref{eq:PEinequality}, Equation \eqref{eq:Thm2proof2} implies that $\bar{Q}=0$, completing the proof.  
\end{proof}

The recursive adaptive law \eqref{des:gradient_algorithm} guarantees that $ \sum_{i=1}^{N} \hat{w}_i \left( t \right) = 1 $, but not that $ \hat{w} \left( t \right) \in [0,1]^N $. Since the set $ \left[ 0,1 \right]^N $ is convex, the projection of $ \hat{w} \left( t \right) $ into $ \left[ 0,1 \right]^N $ is well-defined.

Define the compact set
\begin{equation}
    \Pi \coloneqq \left\lbrace \hat{\bar{w}} \in \left[ 0,1 \right]^{N-1} : \sum_{i=1}^{N-1}\hat{\bar{w}}_i \leq 1 \right\rbrace, \label{des:compactset}
\end{equation}
and let $ \proj_{\Pi,\hat{\bar{w}}} : \mathbb{R}^{N-1} \rightarrow \Pi \subset \mathbb{R}^{N-1} $ denote the parameter projection operator~\cite{ioannou_adaptive_2006}. Choose $ \hat{w}(0) \in \interior(\Pi)$, then the recursive adaptive algorithm \eqref{des:gradient_algorithm} with parameter projection is as follows:
\begin{align}
    \dot{\hat{\bar{w}}} \left( t \right) &= \proj_{\Pi,\hat{\bar{w}}} \left( -\Gamma \left( E^\top \left( t \right) E \left( t \right) \hat{\bar{w}} \left( t \right) + E^\top \left( t \right) \varepsilon_{N} \left( t \right) \right) \right), \nonumber \\
    \hat{w}_{N} \left( t \right) &= 1 - \sum_{i=1}^{N-1} \hat{w}_i \left( t \right). \label{des:gradient_algorithm_projection}
\end{align}
The parameter projection operator enforces that $\Pi$ is a positively invariant subset for the dynamics~\eqref{des:gradient_algorithm_projection}.
\begin{corollary}
	The gradient adaptive law with parameter projection \eqref{des:gradient_algorithm_projection}  has all the properties established in \Cref{des:theorem1} for the gradient adaptive law \eqref{des:gradient_algorithm}. Furthermore, we get the following properties:
    \begin{enumerate}[label=(\roman*)]
        \item If $ \hat{w} \left( 0 \right) \in \Pi $, then $ \hat{w} \left( t \right) \in \Pi $, $ \forall t \geq 0 $.
        \item If $ \hat{w} \left( 0 \right) \in \interior \Pi $, then $ \hat{w} (t) $ asymptotically converges to the set $ \mathcal{W} $.
    \end{enumerate}
    \label{des:corollary1}
\end{corollary}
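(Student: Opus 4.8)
The plan is to transfer the analysis of \Cref{des:theorem1,des:theorem2} to the projected law~\eqref{des:gradient_algorithm_projection} via the standard Lyapunov property of the parameter projection operator, and then to exploit that the projection keeps $\hat w(t)$ an admissible vector of convex weights. First I would note that, since $w^{*}\in\mathcal{W}\subset[0,1]^{N}$ with $\sum_{i}w_{i}^{*}=1$, its truncation $\bar w^{*}$ lies in the set $\Pi$ of~\eqref{des:compactset}; this is exactly the hypothesis under which $\proj_{\Pi,\hat{\bar w}}$ satisfies
\[
  \tilde{\bar w}^{\top}(t)\,\Gamma^{-1}\dot{\hat{\bar w}}(t)\;\le\;-\,\tilde{\bar w}^{\top}(t)\bigl(E^{\top}(t)E(t)\hat{\bar w}(t)+E^{\top}(t)\varepsilon_{N}(t)\bigr),
\]
i.e.\ the projection term can only decrease $\tilde{\bar w}^{\top}\Gamma^{-1}\dot{\hat{\bar w}}$. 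Rewriting the right-hand side as $-\tilde{\bar w}^{\top}(E^{\top}E\tilde{\bar w}+E^{\top}e_{z})$ by means of~\eqref{des:convex_error_matrix}, the same $V_{1}$ of~\eqref{des:V1} then obeys the inequality $\dot V_{1}\le$ (the exact right-hand side of~\eqref{des:V1dot}). From here every line of the proof of \Cref{des:theorem1} carries over unchanged: $\dot V_{1}$ is bounded and eventually nonpositive, so $V_{1},\tilde{\bar w},\hat{\bar w},E\in\mathcal{L}_{\infty}$; $E\tilde{\bar w}$ and, using $\|\proj(\cdot)\|\le\|\cdot\|$ together with~\eqref{des:estimation_error}, $\dot{\tilde{\bar w}}$ are in $\mathcal{L}_{2}$; Barbalat's lemma gives $E(t)\tilde{\bar w}(t)\to0$; and $\hat{\bar w}(t)$ converges to a constant. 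This yields the first assertion of the corollary.

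For item (i) I would invoke the defining feature of the projection: on $\partial\Pi$ it removes the outward-normal component of the update, so the closed convex set $\Pi$ is positively invariant for~\eqref{des:gradient_algorithm_projection}; hence $\hat w(0)\in\Pi$ forces $\hat{\bar w}(t)\in\Pi$ for all $t\ge0$, and since $\hat w_{N}=1-\sum_{i=1}^{N-1}\hat w_{i}\in[0,1]$ whenever $\hat{\bar w}\in\Pi$, we also get $\hat w(t)\in[0,1]^{N}$ with $\sum_{i}\hat w_{i}(t)=1$ — the estimate is always an admissible convex weight vector. (Requiring $\hat w(0)\in\interior\Pi$ in item (ii) is the usual well-posedness condition for the projected update.)

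For item (ii) I would combine invariance with the convergence already in hand. By item (i) the limit $\bar{\bar w}$ of \Cref{des:theorem1}~(iv) lies in the closed set $\Pi$, so $\hat w(t)$ converges to some admissible $w^{\infty}\in[0,1]^{N}$ with $\sum_{i}w^{\infty}_{i}=1$, and it remains to show $w^{\infty}\in\mathcal{W}$, i.e.\ $\sum_{i}w^{\infty}_{i}\Theta_{i}=\Theta_{p}$. Writing $Q(t)=\sum_{i=1}^{N-1}\tilde w_{i}(t)(\Theta_{N}-\Theta_{i})$ as in the proof of \Cref{des:theorem2}, this is equivalent to $\bar Q=\lim_{t\to\infty}Q(t)=0$; since \Cref{des:theorem1}~(iii) still holds, the chain~\eqref{eq:Thm2proof} applies verbatim, and the boundedness and PE assumptions on $\Phi$ inherited from \Cref{des:theorem2} then force $\bar Q=0$, whence $w^{\infty}\in\mathcal{W}$ and in particular $\hat w(t)$ converges to the set $\mathcal{W}$. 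The only genuine work is the projection inequality in the first step (and checking $\bar w^{*}\in\Pi$), after which \Cref{des:theorem1}'s computation is reproduced word for word; the point to keep in mind is that, exactly as for \Cref{des:theorem2}, item (ii) really relies on $\Phi$ being bounded and persistently exciting — without PE the projected estimate still converges, but only to an admissible weight vector and not necessarily into $\mathcal{W}$.
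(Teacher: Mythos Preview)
Your proposal is correct and follows the same approach as the paper: both rely on the standard Lyapunov property of the projection operator (the paper simply cites Theorem~3.10.1 of \cite{ioannou_adaptive_2006}) and then carry over \Cref{des:theorem1,des:theorem2} unchanged. Your explicit write-up of the projection inequality, the invariance argument, and the $\bar Q=0$ step is exactly what that one-line citation encapsulates, and your observation that item~(ii) tacitly needs $\Phi$ bounded and PE is on target---the paper's proof gets this implicitly through its appeal to \Cref{des:theorem2}.
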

\begin{proof}
	The proof follows applying Theorem 3.10.1 of \cite{ioannou_adaptive_2006} to \eqref{des:gradient_algorithm} combined with \Cref{des:theorem1,des:theorem2}.
\end{proof}

\section{Multiple Model Reference Adaptive Control}
\label{sec:syscontrol}

In this section we combine the system identification scheme of \Cref{sec:sysidentification} with a MMRAC controller to achieve asymptotic tracking of the reference system \eqref{pr:model_reference}.

\subsection{Multiple Model Reference Adaptive Control Design}

We use the recursive adaptive algorithm with parameter projection \eqref{des:gradient_algorithm_projection} to design a MMRAC scheme to achieve the adaptive state tracking control task stated in \Cref{sec:Problem_Formulation}.  We utilize \Cref{des:theorem1}~\ref{des:theorem1iii} to construct the proposed MMRAC scheme.

If $ w \in \mathcal{W} $, then multiplying both sides of~\eqref{pr:fixedLi} by $w_i$ and summing over $ i $ yields
\begin{align*}
    \sum_{i=1}^{N} w_i B_i L_i &= \sum_{i=1}^{N} w_i B_r = B_r,
\end{align*}
which implies, together with \eqref{pr:realLstar} from \Cref{pr:assumption_geometric}, that
\begin{equation}
	B_p L^{*} = \sum_{i=1}^{N} w_i B_i L_i.
	\label{des:Lstar}
\end{equation}
Applying the same steps on \eqref{pr:fixedKi} we get
\begin{align}
    \begin{split}
        A_r = \sum_{i=1}^{N} w_i A_r &= \sum_{i=1}^{N} w_i \big(A_i + B_i K_i\big) \\
        &= A_p + \sum_{i=1}^{N} w_i B_i K_i.
    \end{split}
    \label{des:sumKi}
\end{align}
Comparing \eqref{des:sumKi} to \eqref{pr:realKstar} we get that
\begin{equation}
	B_p K^{*} = \sum_{i=1}^{N} w_i B_i K_i.
	\label{des:Kstar}
\end{equation}
Equations~\eqref{des:Lstar} and~\eqref{des:Kstar} motive us to generate estimates of the gains $ K^{*} $ and $ L^{*} $ using the estimates $ \hat{w} \left( t \right)$, 
keeping in mind the rank supposition in Assumption~\ref{pr:assumption_convex_hull}, as
\begin{align}
        \hat{K} \left( t \right) &= \hat{B}_p^{\dagger} (t) \sum_{i=1}^{N} \hat{w}_i \left( t \right) B_i K_i, \label{des:Khat} \\
        \hat{L} \left( t \right) &= \hat{B}_p^{\dagger} (t) \sum_{i=1}^{N} \hat{w}_i \left( t \right) B_i L_i, \label{des:Lhat}
\end{align}
where $ \hat{B}_p^{\dagger} (t) $ is the Moore-Penrose inverse of 
\begin{equation}
    \hat{B}_p (t) =\sum_{i=1}^{N} \hat{w}_i \left( t \right) B_i.
    \label{des:Bphat}
\end{equation}
\Cref{des:Kstar,des:Lstar,des:Khat,des:Lhat}, together with \Cref{des:corollary1} further imply the following:
\begin{corollary}
    Consider the system \eqref{pr:plant} with definitions \eqref{des:filtered_error}, \eqref{des:defineE}, and the reference model \eqref{pr:model_reference}. If $\Phi(t)$ satisfies the PE condition with constants $\alpha_{\Phi 1}$, $\alpha_{\Phi 2}$, and $T_{\Phi}$, then the estimates $ \hat{K} \left( t \right) $ and $ \hat{L} \left( t \right) $  defined in  \eqref{des:Khat}, and \eqref{des:Lhat} are bounded, and asymptotically converge to $ K^{*} $ and $ L^{*} $, respectively.
	\label{des:corollary2}
\end{corollary}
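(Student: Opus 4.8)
The plan is to deduce both claims from \Cref{des:theorem2}, the positive invariance of the compact set $\Pi$ established in \Cref{des:corollary1}, and the continuity of the Moore--Penrose inverse on the set of full-column-rank matrices. First I would establish boundedness. Since $\hat{w}(0)\in\interior\Pi$, \Cref{des:corollary1}~(i) gives $\hat{\bar w}(t)\in\Pi$ for all $t\geq0$, so the full coefficient vector $\hat{w}(t)=[\hat{\bar w}^\top(t),\,1-\sum_{i=1}^{N-1}\hat{w}_i(t)]^\top$ ranges over a compact set of convex-combination weights. Hence $\hat{B}_p(t)=\sum_{i=1}^{N}\hat{w}_i(t)B_i$ from \eqref{des:Bphat}, together with $\sum_{i=1}^{N}\hat{w}_i(t)B_iK_i$ and $\sum_{i=1}^{N}\hat{w}_i(t)B_iL_i$, are continuous images of a compact set, hence bounded. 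By \Cref{pr:assumption_convex_hull}~(ii) every such convex combination $\hat{B}_p(t)$ is full column rank, so $\hat{B}_p(t)$ stays in a compact subset $\mathcal{B}$ of the full-column-rank matrices in $\mathbb{R}^{n\times m}$; on $\mathcal{B}$ the map $M\mapsto M^{\dagger}=(M^\top M)^{-1}M^\top$ is continuous, hence bounded, so $\hat{B}_p^{\dagger}(t)$ is bounded. Boundedness of $\hat{K}(t)$ and $\hat{L}(t)$ then follows from \eqref{des:Khat}--\eqref{des:Lhat}.

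For convergence I would first identify the limit of $\hat{w}(t)$. Since \Cref{des:corollary1} preserves all properties of \Cref{des:theorem1}, in particular \Cref{des:theorem1}~\ref{des:theorem1iv}, the signal $\hat{\bar w}(t)$ converges to a constant, so $\hat{w}(t)$ converges to some $\bar w\in\Pi$ with $\sum_{i=1}^{N}\bar w_i=1$ and $\bar w\in[0,1]^N$. Under the PE hypothesis, \Cref{des:theorem2} gives $\sum_{i=1}^{N}\hat{w}_i(t)\Theta_i\to\Theta_p$; letting $t\to\infty$ yields $\sum_{i=1}^{N}\bar w_i\Theta_i=\Theta_p$, i.e. $\bar w\in\mathcal{W}$. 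Consequently $\hat{B}_p(t)\to\sum_{i=1}^{N}\bar w_iB_i=B_p$, which is full column rank and lies in $\mathcal{B}$, so continuity of $M\mapsto M^{\dagger}$ on $\mathcal{B}$ gives $\hat{B}_p^{\dagger}(t)\to B_p^{\dagger}$. Using \eqref{des:Kstar} and \eqref{des:Lstar}, which say precisely that $w\mapsto\sum_i w_iB_iK_i$ and $w\mapsto\sum_i w_iB_iL_i$ are constant on $\mathcal{W}$, I obtain $\sum_{i=1}^{N}\hat{w}_i(t)B_iK_i\to B_pK^{*}$ and $\sum_{i=1}^{N}\hat{w}_i(t)B_iL_i\to B_pL^{*}$.

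Finally I would combine the limits: from \eqref{des:Khat}--\eqref{des:Lhat}, $\hat{K}(t)\to B_p^{\dagger}B_pK^{*}$ and $\hat{L}(t)\to B_p^{\dagger}B_pL^{*}$, and since $B_p$ is full column rank, $B_p^{\dagger}B_p=\mathbb{I}$ (the $m\times m$ identity). Hence $\hat{K}(t)\to K^{*}$ and $\hat{L}(t)\to L^{*}$, which is the claim.

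I expect the main obstacle to be the careful treatment of the pseudo-inverse. One must confirm that the trajectory $\hat{B}_p(t)$ never leaves the full-column-rank regime --- which is exactly what \Cref{pr:assumption_convex_hull}~(ii) together with $\hat{w}(t)\in\Pi$ guarantees --- and then transfer convergence of $\hat{B}_p(t)$ to convergence of $\hat{B}_p^{\dagger}(t)$ by invoking (uniform) continuity of the pseudo-inverse on the compact full-rank set $\mathcal{B}$, since the pseudo-inverse map is only continuous where rank is locally constant. Once this is in place, the remaining steps are routine limit arithmetic together with the identity $B_p^{\dagger}B_p=\mathbb{I}$.
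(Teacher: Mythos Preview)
Your proposal is correct and follows essentially the same approach as the paper: boundedness via compactness of $\Pi$, \Cref{pr:assumption_convex_hull}~(ii), and the explicit full-rank pseudo-inverse formula $\hat{B}_p^{\dagger}=(\hat{B}_p^\top\hat{B}_p)^{-1}\hat{B}_p^\top$; convergence via the parameter convergence of \Cref{des:theorem2}/\Cref{des:corollary1}. You are considerably more explicit than the paper on the convergence step --- the paper simply states that ``the proof to show asymptotic convergence is the same as the proof for \Cref{des:theorem1}~\ref{des:theorem1iii}'' --- whereas you spell out the limit $\hat{w}(t)\to\bar{w}\in\mathcal{W}$, the continuity of $M\mapsto M^{\dagger}$ on the compact full-rank set, and the identity $B_p^{\dagger}B_p=\mathbb{I}$, which makes the argument self-contained.
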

\begin{proof}
    Since $ \Pi $ is compact, we get that $ \hat{B}_p (t) $ belongs to a compact set, which means the Moore-Penrose inverse of $ \hat{B}_p (t) $ exists, and it is bounded. Furthermore, from \Cref{pr:assumption_convex_hull}, we have that $ \hat{B}_p (t) $ is full column rank, which means that we can write
    \begin{equation}
        \hat{B}_p^{\dagger} (t) = \left( \hat{B}_p^\top (t) \hat{B}_p (t)  \right)^{-1} \hat{B}_p^\top (t).
        \label{des:Bphat_inv}
    \end{equation}    
    Combining \eqref{des:Bphat_inv} with \Cref{des:Khat,des:Lhat} we get that $ \hat{K} \left( t \right) $, and $ \hat{L} \left( t \right) $ are bounded. The proof to show asymptotic convergence is the same as the proof for \Cref{des:theorem1}~\ref{des:theorem1iii}.
\end{proof}
The control law we will consider to achieve asymptotic tracking of the reference model is
\begin{equation}
    u \left( t \right) \coloneqq \hat{K} \left( t \right) x_p \left( t \right) + \hat{L} \left( t \right) r \left( t \right).
    \label{des:controllaw}
\end{equation}

\subsection{Stability Analysis of the Entire System}

The main result of the paper is now presented.

\begin{theorem}
    Consider the plant \labelcref{pr:plant} and the reference model \labelcref{pr:model_reference}. If $\Phi(t)$ satisfies the PE condition \eqref{eq:PEinequality}, and \Cref{pr:assumption_convex_hull,pr:assumption_geometric} hold, then the MMRAC scheme \labelcref{des:filtered_plant,des:filtered_error,des:filtered_fixed_models,des:gradient_algorithm_projection,des:Khat,des:Lhat,des:Bphat,des:Bphat_inv,des:controllaw} guarantees that for any
    \begin{enumerate}[label=(\roman*)]
        \item initial conditions of the plant \eqref{pr:plant},
        \item initial conditions of the reference model \eqref{pr:model_reference}, and
        \item piecewise continuous and bounded reference signal $ {r : \left[ 0,\infty \right) \rightarrow \mathbb{R}^{m}} $ in \eqref{pr:model_reference},
    \end{enumerate}
    all closed-loop signals are bounded and $ x_p \left( t \right) $ asymptotically converges to $ x_r \left( t \right) $.
\end{theorem}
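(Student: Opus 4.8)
The plan is to run the classical model-reference error analysis, leaning on two facts already established in \Cref{sec:sysidentification}: the parameter projection in \eqref{des:gradient_algorithm_projection} keeps the weight estimate in the compact set $\Pi$, so the adaptive gains are bounded \emph{for all time}, and the PE hypothesis makes those gains \emph{converge to their ideal values}. Concretely, I would set $e \coloneqq x_p - x_r$, $\tilde K(t) \coloneqq \hat K(t) - K^*$, $\tilde L(t) \coloneqq \hat L(t) - L^*$, substitute the control law \eqref{des:controllaw} into the plant \eqref{pr:plant}, and use \eqref{pr:realKstar}--\eqref{pr:realLstar} from \Cref{pr:assumption_geometric} to replace $A_p = A_r - B_p K^*$ and $B_r = B_p L^*$. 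This yields the tracking-error dynamics
\begin{equation*}
    \dot e(t) = A_r e(t) + B_p\left(\tilde K(t)\,x_p(t) + \tilde L(t)\,r(t)\right),
\end{equation*}
which, writing $x_p = e + x_r$, becomes the linear time-varying system $\dot e = \left(A_r + B_p\tilde K(t)\right)e + g(t)$ with $g(t) \coloneqq B_p\tilde K(t)x_r(t) + B_p\tilde L(t)r(t)$, where $x_r$ is bounded since $A_r$ is Hurwitz and $r$ is bounded.

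Next I would import the identification results. By \Cref{des:corollary1}, $\hat w(t)\in\Pi$ for every $t\ge 0$, so $\hat B_p(t)$, its Moore--Penrose inverse, $\hat K(t)$ and $\hat L(t)$ are bounded for all $t\ge0$; in particular the closed loop is forward complete, because once $\hat w$ is confined to $\Pi$ the remaining $(x_p,\phi_1,\phi_2)$ subsystem is linear with bounded coefficients and bounded forcing. Since $\Phi$ satisfies the PE condition \eqref{eq:PEinequality}, \Cref{des:corollary2} gives $\hat K(t)\to K^*$ and $\hat L(t)\to L^*$, i.e.\ $\tilde K(t)\to 0$ and $\tilde L(t)\to 0$. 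Hence $g$ is bounded with $g(t)\to 0$, and the perturbation $B_p\tilde K(t)$ of the nominal Hurwitz matrix $A_r$ is bounded and vanishing.

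To prove boundedness of $e$ I would split the time axis. Fix $P\succ0$ solving $A_r^\top P + P A_r = -\mathbb{I}$ and take $V(e)=e^\top P e$; a routine estimate gives $\dot V \le -\|e\|^2 + 2\|P\|\,\|B_p\tilde K(t)\|\,\|e\|^2 + 2\|P\|\,\|e\|\,\|g(t)\|$. Choosing $T>0$ with $\|B_p\tilde K(t)\| \le \tfrac1{4\|P\|}$ for $t\ge T$ (possible since $\tilde K(t)\to0$) makes $\dot V \le -\tfrac14\|e\|^2$ whenever $\|e\|$ exceeds a bound proportional to $\sup_{s\ge T}\|g(s)\|<\infty$, which bounds $e$ on $[T,\infty)$; on $[0,T]$ boundedness is immediate since the error equation is linear with coefficients bounded on $[0,T]$. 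Then $x_p = e + x_r$ is bounded, hence $u$ from \eqref{des:controllaw} is bounded, hence $\dot x_p$, the filter states in \eqref{des:filtered_plant}, and all remaining closed-loop signals are bounded. For asymptotic tracking, on $[T,\infty)$ the homogeneous system $\dot e = (A_r + B_p\tilde K(t))e$ is a uniformly small, vanishing perturbation of the exponentially stable $\dot e = A_r e$ and is therefore itself uniformly exponentially stable; with the vanishing forcing $g(t)\to 0$ this forces $e(t)\to0$ (via variation of constants, or by refining the estimate above to $\dot V \le -\tfrac14\|e\|^2 + \|e\|\,\varepsilon(t)$ with $\varepsilon(t)\to0$), i.e.\ $x_p(t)\to x_r(t)$.

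I expect the boundedness step to be the main obstacle. Unlike a textbook MRAC argument with a single decreasing Lyapunov function for the joint state $(e,\tilde K,\tilde L)$, here the matrix $A_r + B_p\tilde K(t)$ driving the error is not a priori Hurwitz, so one must separate a finite "no blow-up'' transient from the eventually contracting tail, and one must confirm that the PE-based convergence invoked from \Cref{des:corollary2} is consistent with all signals remaining bounded. It is: the adaptive law is driven only by the normalized, inherently bounded regressor $\Phi/m_s$, and the projection confines $\hat w$ to $\Pi$, so the identifier loop does not rely on boundedness of the control loop, and no circularity arises.
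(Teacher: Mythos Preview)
Your proposal is correct and follows essentially the same route as the paper: derive the tracking-error dynamics $\dot e = (A_r + B_p\tilde K(t))e + B_p\tilde K(t)x_r + B_p\tilde L(t)r$, use the quadratic Lyapunov function $V=e^\top P e$ with $P$ solving the Lyapunov equation for $A_r$, invoke \Cref{des:corollary2} under the PE hypothesis to make $\tilde K,\tilde L\to 0$, split time into a finite transient and an eventually contracting tail to get boundedness, and then conclude $e\to 0$ from the vanishing perturbation and forcing. Your treatment of forward completeness and of the non-circularity between the identifier and control loops is in fact slightly more careful than the paper's.
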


\begin{proof}
    Let $x_p(0)=x_{p0}$ and $x_r(0)=x_{r0} $ be arbitrary initial plant and reference model states, and $ r(t) $ be any known, bounded, piecewise continuous reference signal. Substituting \eqref{des:controllaw} into \eqref{pr:plant} we get
    \begin{align*}
        \dot{x}_p \left( t \right) &= A_p x_p \left( t \right) + B_p u \left( t \right) \\
        &= A_p x_p \left( t \right) + B_p \hat{K} \left( t \right) x_p \left( t \right) + B_p \hat{L} \left( t \right) r \left( t \right).
    \end{align*}
    Adding and subtracting $ B_p K^{*} x_p \left( t \right) $ and $ B_p L^{*} r \left( t \right) $, defining $ \tilde{K} \left( t \right) \coloneqq \hat{K} \left( t \right) - K^{*} $, $ \tilde{L} \left( t \right) \coloneqq \hat{L} \left( t \right) - L^{*} $, and using \Cref{pr:assumption_geometric} we get
    \begin{align*}
        \dot{x}_p \left( t \right) = A_r x_p \left( t \right) + B_r r \left( t \right) + B_p \tilde{K} \left( t \right) x_p \left( t \right) + B_p \tilde{L} \left( t \right) r \left( t \right).
    \end{align*}
    For the tracking error $ e \left( t \right) \coloneqq x_p \left( t \right) - x_r \left( t \right) $, this implies\footnotesize
    \begin{align}
        \dot{e} \left( t \right) &= A_r e\left( t \right) + B_p \tilde{K} \left( t \right) x_p \left( t \right) + B_p \tilde{L} \left( t \right) r \left( t \right) \nonumber \\
        &= \left( A_r + B_p \tilde{K}(t) \right) e (t) + B_p \tilde{K} (t) x_r (t) + B_p \tilde{L} (t) r (t). \label{des:trackingerror}
    \end{align}
    \normalsize Let $ Q \in \mathbb{R}^{n \times n} $ be a fixed, symmetric, and positive definite matrix. Then we can define $ P \in \mathbb{R}^{n \times n} $ to be the unique positive definite and symmetric solution of $ {PA_r + A_r^\top P + Q = 0} $. Consider 
    the positive definite function
    \begin{align}
    	V \left( e \left( t \right) \right) &= e^\top \left( t \right) P e \left( t \right).
        \label{des:lyapunovcontrol}
    \end{align}
	Taking the derivative of \eqref{des:lyapunovcontrol} along solutions of~\eqref{des:trackingerror} we get
	\begin{align}
	    \dot{V} \left( e \left( t \right) \right) 
     \nonumber \\
	    &=  
     2 e^\top \left( t \right) P \dot{e} \left( t \right) \nonumber \\
  &= 2 e^\top \left( t \right) P A_r e \left( t \right) + 2 e^\top \left( t \right) P B_p \tilde{K} \left( t \right) x_p \left( t \right) \nonumber \\
	    &\qquad + 2 e^\top \left( t \right) P B_p \tilde{L} \left( t \right) r \left( t \right).
     \label{eq:Vdot}
	\end{align}
	The first term satisfies
	\begin{align*}
	    2 e^\top \left( t \right) P A_r e \left( t \right) &= e^\top \left( t \right) P A_r e \left( t \right) + e^\top \left( t \right) P A_r e \left( t \right) \\
	    &= e^\top \left( t \right) \left( P A_r + A_r^\top P \right) e \left( t \right) \\
	    &= - e^\top \left( t \right) Q e \left( t \right).
	\end{align*}
	Hence, \eqref{eq:Vdot} can be rewritten as
 as
	\begin{align*}
	    \dot{V} \left( e \left( t \right) \right) = &- e^\top \left( t \right) Q e \left( t \right) + 2 e^\top \left( t \right) P B_p \tilde{K} \left( t \right) x_p \left( t \right) \\
	    &\qquad+ 2 e^\top \left( t \right) P B_p \tilde{L} \left( t \right) r \left( t \right).
	\end{align*}
	Substituting $ x_p (t) = e (t) + x_r (t) $, we get
	\begin{align*}
	    \dot{V} \left( e \left( t \right) \right) = &- e^\top \left( t \right) \left( Q - P B_p \tilde{K} \left( t \right) \right) e \left( t \right) \\
	    &+ 2 e^\top \left( t \right) P B_p \tilde{K} \left( t \right) x_r \left( t \right) \\
	    &+ 2 e^\top \left( t \right) P B_p \tilde{L} \left( t \right) r \left( t \right).
	\end{align*}
	Defining
    \begin{equation*}
        c_1 (t) \coloneqq \| P B_p \tilde{K} (t) \|,
    \end{equation*}
    and
    \begin{equation*}
        c_2 (t) \coloneqq 2 \| P B_p \tilde{K} (t) \| \| x_r (t) \| + 2 \| P B_p \tilde{L} (t) \| \| r (t) \|,
    \end{equation*}
	we have
	\begin{equation}
	    \dot{V} (e(t)) \leq - \left( \lambda_{\textrm{min}} (Q) - c_1 (t) \right) \| e (t) \|^2 + c_2 (t) \| e (t) \|, \label{des:lyaptheorem2}
	\end{equation}
	where $ \lambda_{\textrm{min}} (Q) > 0 $ is the minimum eigenvalue of $ Q $. Note that \eqref{des:trackingerror} does not have a finite escape time; from \Cref{des:corollary2} we have that $ \hat{K} (t) $ and $ \hat{L} (t) $ are continuous and bounded, which means that $ \tilde{K} (t) $ and $ \tilde{L} (t) $ are also continuous and bounded, hence \eqref{des:trackingerror} may only go to infinity as time goes to infinity. 
    Moreover, if $\Phi(t)$ is PE, then $\tilde{K} (t)$ and $\tilde{L} (t)$ converge to zero asymptotically, which lets us conclude that $c_1 (t)$ is bounded, and there exists $ t_1 \geq 0 $ such that $\lambda_{\textrm{min}} (Q) > c_1 (t)$ for all $ t \geq t_1 $. We can define
    \begin{align}
        \bar{c}_1 = \sup \{ c_1(t), t \geq t_1 \}, \\
        \bar{c}_2 = \sup \{ c_2(t), t \geq t_1 \}.
    \end{align}
    
    This implies that for all $ t \geq t_1 $ we get that if
    \begin{align*}
        \| e(t) \| > \dfrac{\bar{c}_2}{\lambda_{\min} Q - \bar{c}_1},
    \end{align*}
    then $\dot{V} < 0$, and we can conclude that $e$ is bounded.
    This further implies that $x_p$ is bounded, which finally implies that $ u $ is bounded, showing that all signals in the closed-loop system are bounded. Combining this with \Cref{des:corollary2} we get
	\begin{align*}
	    \lim_{t \rightarrow \infty} \left( \dot{V} (e(t)) + e^\top \left( t \right) Q e \left( t \right) \right) = 0.
	\end{align*}
	This implies that $ e (t) $ converges to $ 0 $, asymptotically, i.e., $ x_p (t) $ asymptotically converges to $ x_r (t) $.
\end{proof}
\section{Simulations}
\label{sec:Simulation}

\begin{figure}
    \centering
    \includegraphics[width=1\linewidth]{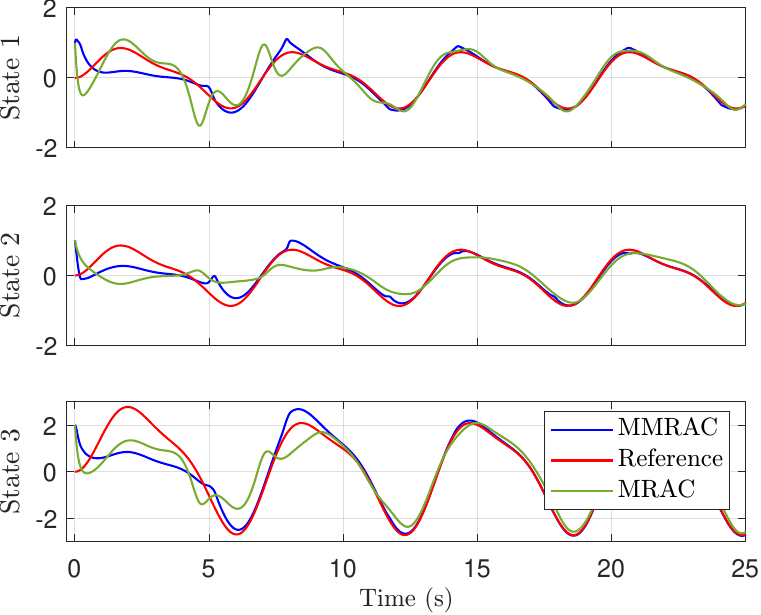}
    \caption{State of the system $ x_p(t) $, and state of the reference model $ x_r(t) $.}
    \label{fig:track}
\end{figure}

\begin{figure}
    \centering
    \includegraphics[width=1\linewidth]{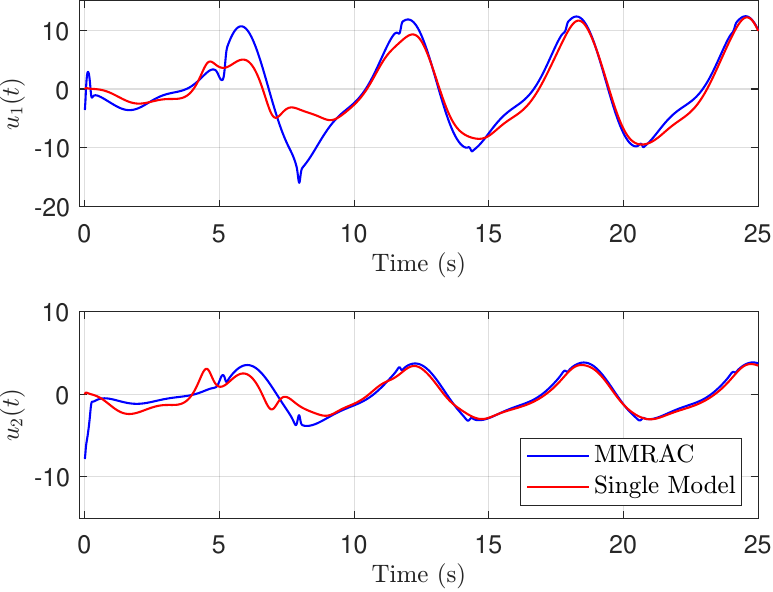}
    \caption{Control efforts for MMRAC and a single model MRAC.}
    \label{fig:controleffort}
\end{figure}

\begin{figure}
    \centering
    \includegraphics[width=1\linewidth]{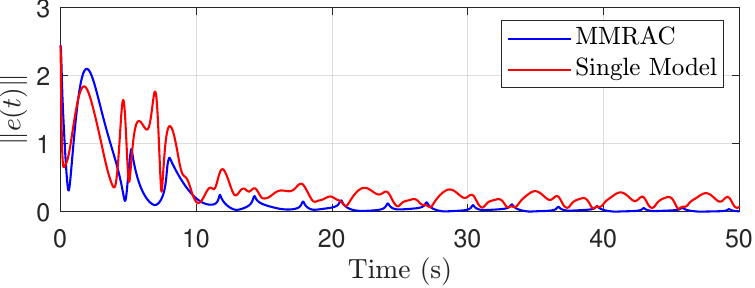}
    \caption{Euclidean norm of the tracking error for MMRAC and a single model MRAC.}
    \label{fig:errornorm}
\end{figure}

\begin{figure}
    \centering
    \includegraphics[width=1\linewidth]{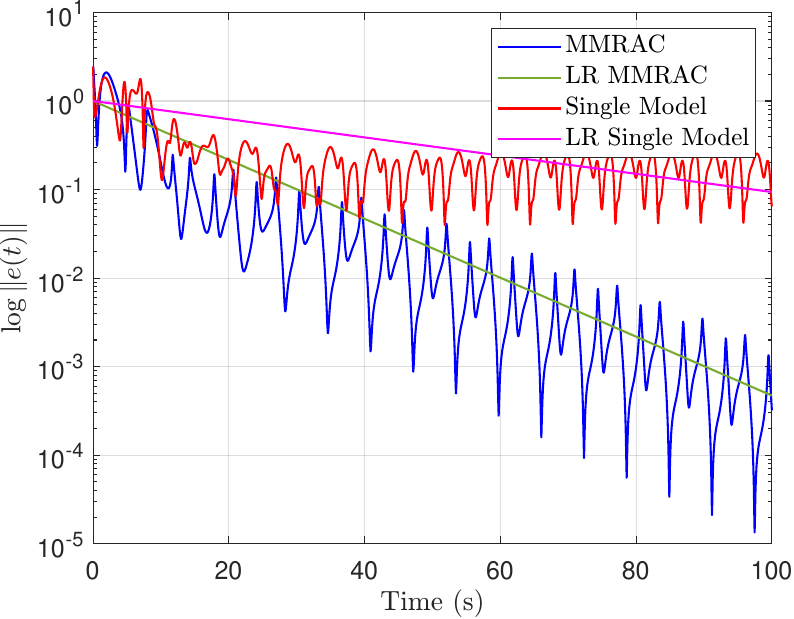}
    \caption{Semilog plot for the norm 2 of the tracking errors, and the linear regressions for MMRAC and a single model.}
    \label{fig:convergencespeed}
\end{figure}

\begin{figure}
    \centering
    \includegraphics[width=1\linewidth]{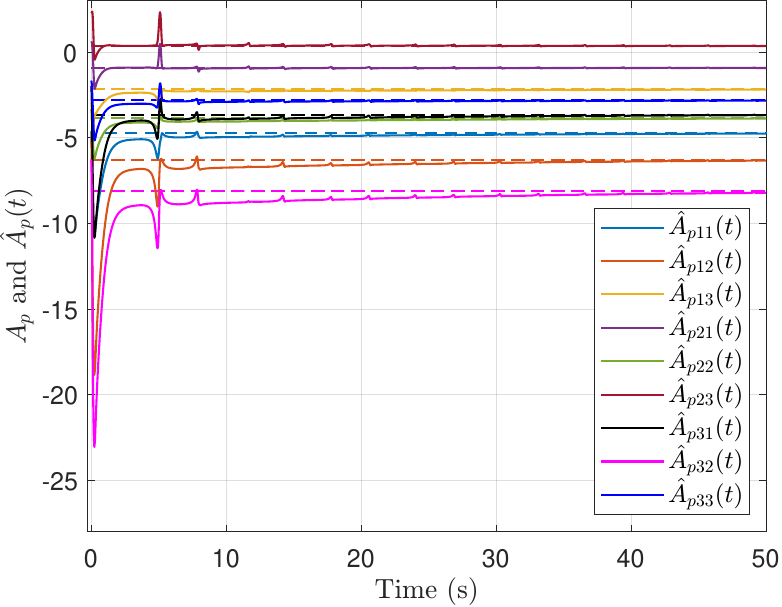}
    \caption{Estimation of each entry of $A_p$. The dashed line is the true value.}
    \label{fig:ApSimulation}
\end{figure}

\begin{figure}
    \centering
    \includegraphics[width=1\linewidth]{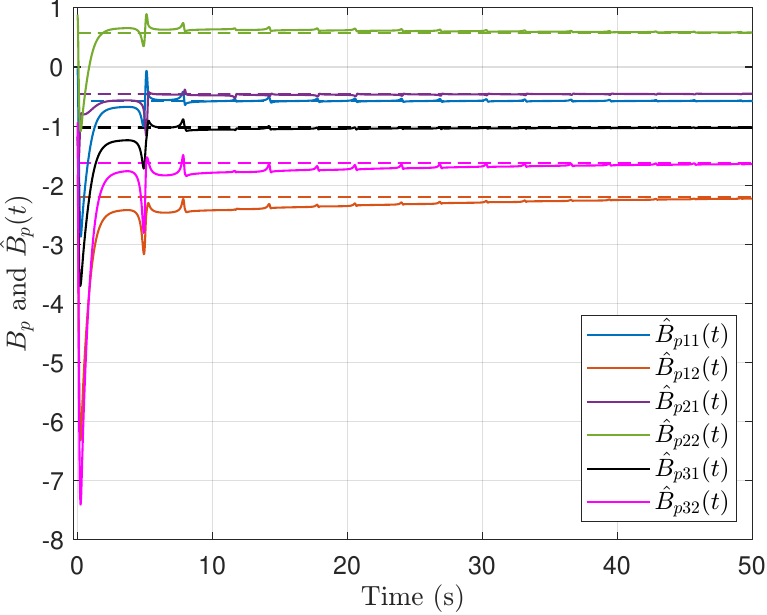}
    \caption{Estimation of each entry of $B_p$. The dashed line is the true value.}
    \label{fig:BpSimulation}
\end{figure}


In this section, we illustrate the behavior and performance of the proposed MMRAC scheme through a set of simulation tests performed on an uncertain model~\eqref{pr:plant}, a reference model~\eqref{pr:model_reference}, and a set $\mathcal{S}$~\eqref{pr:setS} that defines the polytopic uncertainty. We compare the results with simulations using a single model for MRAC.

Consider the uncertain system with $ n = 3 $, $ m = 2 $ given by the matrices
\small
\begin{equation*}
    A_p =
    \begin{bmatrix}
        -4.725 & -6.275 & -2.175 \\
        -0.925 & -3.85 & 0.35 \\
        -3.65 & -8.125 & -2.825
    \end{bmatrix},\:
    B_p =
    \begin{bmatrix}
        -0.575 & -2.2 \\
        -0.45 & 0.575 \\
        -1.025 & -1.625
    \end{bmatrix}.
\end{equation*}
\normalsize
The reference model~\eqref{pr:model_reference} is \small
\begin{equation*}
A_r = \begin{bmatrix}
        -1 & 0 & 0 \\
        0 & -1 & 0 \\
        1 & 1 & -1
    \end{bmatrix},\:
B_r = \begin{bmatrix}
        1 & 0 \\
        0 & 1 \\
        1 & 1
    \end{bmatrix}.
\end{equation*}
\normalsize 
It can be verified that~\Cref{pr:assumption_geometric} is satisfied by defining the matrices
\begin{align*}
    K^* =
    \begin{bmatrix}
        -3.16 & -7.48 & -0.36 \\
        -0.87 & -0.89 & -0.89
    \end{bmatrix},
    L^* =
    \begin{bmatrix}
        -0.44 & -1.66 \\
        -0.34 & 0.44
    \end{bmatrix}.
\end{align*}
Considering the following $5$ matrix pairs
\begin{align*}
    A_1 = \begin{bmatrix}
        -0.75 & 0.25 & 0.25 \\
        -3.75 & -4.75 & -3.75 \\
        -2.5 & -2.5 & -4.5
    \end{bmatrix},
    B_1 = \begin{bmatrix}
        0.25 & -0.5 \\
        1.25 & 2.5 \\
        1.5 & 2
    \end{bmatrix},
\end{align*}
\begin{align*}
    A_2 = \begin{bmatrix}
        -16 & -30 & -5 \\
        -3.5 & -8.5 & -1.5 \\
        -17.5 & -36.5 & -7.5
    \end{bmatrix},
    B_2 = \begin{bmatrix}
        -5 & -10 \\
        -1 & -2.5 \\
        -6 & -12.5
    \end{bmatrix},
\end{align*}
\begin{align*}
    A_3 = \begin{bmatrix}
        2 & 0 & -1 \\
        -0.5 & -1.5 & -0.5 \\
        -0.5 & 0.5 & -2.5
    \end{bmatrix},
    B_3 = \begin{bmatrix}
        -1 & 1 \\
        -0.5 & 0 \\
        -1.5 & 1
    \end{bmatrix},
\end{align*}
\begin{align*}
    A_4 = \begin{bmatrix}
        -1.5 & -0.75 & -0.75 \\
        -0.25 & -0.875 & 1.125 \\
        0.25 & 0.375 & -0.625
    \end{bmatrix},
    B_4 = \begin{bmatrix}
        0.25 & 0.25 \\
        0.125 & -0.375 \\
        0.375 & -0.125
    \end{bmatrix},
\end{align*}
\begin{align*}
    A_5 = \begin{bmatrix}
        -4 & -1 & -5 \\
        5 & -2 & 8 \\
        3 & -1 & 2
    \end{bmatrix},
    B_5 = \begin{bmatrix}
        2 & -1 \\
        -3 & 2 \\
        -1 & 1
    \end{bmatrix},
\end{align*}
we can define the set
\begin{align*}
    \mathcal{S} =
    \left\lbrace
        \begin{bmatrix}
            A_1 & B_1
        \end{bmatrix},
        \cdots,
        \begin{bmatrix}
            A_5 & B_5
        \end{bmatrix}
    \right\rbrace.
\end{align*}
The set $\mathcal{S}$ satisfies~\Cref{pr:assumption_convex_hull}. The design parameters for the estimation scheme and the controller are $ \Gamma = 2 I_{5 \times 5} $, $ \lambda = 0.5 $, and $ \alpha = 0.01 $, with the initial condition $\hat{w}(0) = \begin{bmatrix}
        0.2 & 0.15 & 0.15 & 0.1 & 0.4
    \end{bmatrix}^\top$. The input to the reference model is taken to be $ r_1 \left( t \right) = r_2 \left( t \right) = \sin{ \left( t \right) } + 0.5 \sin{2t}$. With these definitions the full controller is
\begin{align*}
    \begin{bmatrix}
        \dot{\phi}_1 (t) \\
        \dot{\phi}_2 (t)
    \end{bmatrix}
    &= -\lambda
    \begin{bmatrix}
        \phi_1 (t) \\
        \phi_2 (t)
    \end{bmatrix}
    +
    \begin{bmatrix}
        x_p (t) \\
        u (t)
    \end{bmatrix}, \\
    \dot{\hat{\bar{w}}} \left( t \right) &= \proj_{\Pi,\hat{\bar{w}}} \left( -\Gamma \left( E^\top \left( t \right) E \left( t \right) \hat{\bar{w}} \left( t \right) + E^\top \left( t \right) \varepsilon_{5} \left( t \right) \right) \right), \\
    \hat{w}_{5} \left( t \right) &= 1 - \sum_{i=1}^{4} \hat{w}_i \left( t \right), \\
    \hat{K} \left( t \right) &= \hat{B}_p^{\dagger} (t) \sum_{i=1}^{5} \hat{w}_i \left( t \right) B_i K_i, \\
    \hat{L} \left( t \right) &= \hat{B}_p^{\dagger} (t) \sum_{i=1}^{5} \hat{w}_i \left( t \right) B_i L_i,  \\
    u \left( t \right) &= \hat{K} \left( t \right) x_p \left( t \right) + \hat{L} \left( t \right) r \left( t \right).
\end{align*}

The simulations of the MMRAC are compared to a direct adaptive control technique using a single model MRAC (see Chapter 9 of~\cite{tao2003adaptive}). The simulations consider the same initial condition for the system, the same reference, and the initial gains are calculated as
\begin{align*}
    \hat{K}(0) = \hat{B}_p^{\dagger} (0)\sum_{i=0}^{5} w_{i}(0) K_i, \\
    \hat{L}(0) = \hat{B}_p^{\dagger} (0)\sum_{i=0}^{5} w_{i}(0) L_i.
\end{align*}

In~\Cref{fig:track} we see the time evolution of each of the states for the reference model, the states generated by the MMRAC, and the states generated by MRAC. Convergence is achieved for both controllers, with a clear advantage of the multiple-model technique while using approximately the same control effort, as depicted in~\Cref{fig:controleffort}. The convergence speed is a key factor to consider, and there is a clear advantage of using multiple models when we analyze the dynamics of the error. In~\Cref{fig:errornorm} we compare the time evolution of the norm of the error for the case of MMRAC, and MRAC. To better visualize the advantage in convergence speed, we present the norm of the error in log-scale in~\Cref{fig:convergencespeed}, as we perform a linear regression on each of the signals. We see that the convergence speed of the MMRAC scheme is approximately two order of magnitude faster than the single model approach (the slope of the linear regressions are -0.0333 and -0.0103 for MMRAC and a single model, respectively), matching with the conjecture in~\cite{NarendraKumpatiS2014Srap}. Finally, as mentioned in~\Cref{sec:sysidentification}, we achieve asymptotic convergence of the estimated system's matrices. In~\Cref{fig:ApSimulation,fig:BpSimulation} we see the asymptotic convergence of each entry of the matrix pairs $\begin{bmatrix}
    \hat{A}_p(t) & \hat{B}_p(t)
\end{bmatrix}$ to $\begin{bmatrix}
    A_p & B_p
\end{bmatrix}$, respectively.

\section{Conclusions}
\label{sec:Conclusions}

In this article we developed a multiple model reference adaptive control (MMRAC) scheme for multi-input, linear, time-invariant systems with uncertain parameters that lie inside a known compact and convex polytope. The controller performs online parameter identification of the system matrices as a convex combination of an arbitrary number of fixed model, one at each extreme point of the convex polytope of uncertainty. The identification is proven to be asymptotically stable, and sufficient conditions for perfect identification of the uncertain system matrices are provided. The tracking controller of the MMRAC guarantees that all close-loop signals are bounded, and that the difference between the plant states and the ones generated by a linear reference model asymptotically converge to zero. To verify the effectiveness of the proposed MMRAC scheme we compare it to a single model MRAC scheme through MATLAB and Simulink simulations. For the simulation examples, we observe that the convergence speed is approximately two orders of magnitude faster than using a single model, with similar control effort. Future lines of research include generalizing the proposed scheme to linear, time-varying systems and nonlinear systems. 

\bibliographystyle{IEEEtran}
\bibliography{IEEEfull,root}

\end{document}